\journal{Applied Energy}
\newcommand{\Nd}{\ensuremath{\mathds{N}}}
\newcommand{\Zd}{\ensuremath{\mathds{Z}}}
\newcommand{\Kc}{\ensuremath{\mathcal{K}}}
\newcommand{\Oc}{\ensuremath{\mathcal{O}}}
\newcommand{\Tc}{\ensuremath{\mathcal{T}}}
\newcommand{\Xc}{\ensuremath{\mathcal{X}}}
\theoremstyle{definition}
\newtheorem{prop}{Proposition}
\definecolor{customRed}{RGB}{53,98,111}
\definecolor{customGreen}{RGB}{229,84,81}
\DeclareMathOperator*{\argmax}{arg\,max}
\newcommand{\set}[1]{ \ensuremath{\left\{ #1 \right\}} }
\definecolor{box_color}{RGB}{31, 119, 180}
\definecolor{bridge_red}{RGB}{214, 39, 40}
\definecolor{arrow_color}{RGB}{0, 80, 155}
\begin{document}

\begin{frontmatter}

%% Title, authors and addresses

%% use the tnoteref command within \title for footnotes;
%% use the tnotetext command for theassociated footnote;
%% use the fnref command within \author or \affiliation for footnotes;
%% use the fntext command for theassociated footnote;
%% use the corref command within \author for corresponding author footnotes;
%% use the cortext command for theassociated footnote;
%% use the ead command for the email address,
%% and the form \ead[url] for the home page:
%% \title{Title\tnoteref{label1}}
%% \tnotetext[label1]{}
%% \author{Name\corref{cor1}\fnref{label2}}
%% \ead{email address}
%% \ead[url]{home page}
%% \fntext[label2]{}
%% \cortext[cor1]{}
%% \affiliation{organization={},
%%            addressline={}, 
%%            city={},
%%            postcode={}, 
%%            state={},
%%            country={}}
%% \fntext[label3]{}

\title{High-frequency intraday trading for battery
storages} %% Article title

\tnotetext[tn_pub]{\copyright~2026. Published by Elsevier Ltd.\\
This is an open access article under the CC BY 4.0 license (\url{https://creativecommons.org/licenses/by/4.0/}).\\
Published in \textit{Applied Energy}: \url{https://doi.org/10.1016/j.apenergy.2026.128550}}

%use optional labels to link authors explicitly to addresses:
\author[label1]{David Schaurecker}
\ead{dschaurecker@gmail.com}
\author[label2]{David Wozabal}
\author[label3]{Nils Löhndorf}
\author[label1,label4]{Thorsten Staake}

\affiliation[label1]{organization={ETH Zurich},
            addressline={Rämistrasse 101},
            postcode={8092 Zurich},
            country={Switzerland}}

\affiliation[label2]{organization={Vrije Universiteit Amsterdam},
            addressline={De Boelelaan 1105},
            %city={Amsterdam},
            postcode={1081 HV Amsterdam},
            country={Netherlands}}

\affiliation[label3]{organization={University of Luxembourg},
            addressline={6 Rue Richard Coudenhove-Kalergi},
            %city={Amsterdam},
            postcode={L-1359},
            country={Luxembourg}}

\affiliation[label4]{organization={University of Bamberg},
            addressline={An der Weberei 5},
            postcode={96047 Bamberg},
            country={Germany}}

%% Abstract
\begin{abstract}
    Maximizing revenue for grid-scale battery energy storage systems in continuous intraday electricity markets requires strategies that are able to seize trading opportunities as soon as new information arrives. This paper introduces and evaluates a computationally efficient, high-frequency implementation of the rolling intrinsic trading strategy for battery energy storage systems on the intraday market for power. By combining the established rolling intrinsic logic with a full limit order book representation and a fast dynamic programming approximation, our method explicitly considers the continuously updated list of buy and sell offers, market rules, a linear approximation of degradation, and other technical parameters at a millisecond resolution. The standard rolling intrinsic strategy is adapted for continuous intraday electricity markets and solved using a dynamic programming approximation that is two to three orders of magnitude faster than an exact mixed-integer linear programming solution. A detailed backtest over a full year of German order book data demonstrates that the proposed dynamic programming formulation does not reduce trading profits and enables the policy to react to every relevant order book update, enabling realistic rapid backtesting. Our results show the significant revenue potential of high-frequency trading: our policy earns $58$\% more than when re-optimizing only once every hour and $14$\% more than when re-optimizing once per minute, highlighting that profits critically depend on trading speed. Furthermore, we leverage the speed of our algorithm to train a parametric extension of the rolling intrinsic, increasing yearly revenue by $8.4$\% out of sample. %The results demonstrate that accurately testing continuous intraday trading strategies for batteries over extended periods requires the incorporation of the market’s high-frequency trading dynamics. Our proposed method establishes a new benchmark for future research and industry alike.
\end{abstract}

%%Research highlights
% \begin{highlights}
% \item Highlight 1
% \item Highlight 2
% \end{highlights}

%% Keywords
\begin{keyword}
electricity trading \sep asset-backed trading \sep short-term power markets \sep dynamic programming \sep limit order book
%% keywords here, in the form: keyword \sep keyword

%% PACS codes here, in the form: \PACS code \sep code

%% MSC codes here, in the form: \MSC code \sep code
%% or \MSC[2008] code \sep code (2000 is the default)
\end{keyword}

\end{frontmatter}
\newpage

%%%%%%%%%%%%%%%%%%%%%%%
%%% 1. Introduction %%%
%%%%%%%%%%%%%%%%%%%%%%%
\section{Introduction} \label{sec:intro}
% The ongoing global energy transition is driven by the ambition to achieve net-zero emissions, with challenging goals set by governments and international organizations worldwide \citep{iea2021netzero}. A major component of this transition is the decarbonization of electricity production, which aims to shift energy generation from fossil fuels to renewable sources such as solar and wind. Although renewable energy technologies have gone through significant advancements and cost reductions in recent decades, their intermittent and variable nature poses a critical challenge to grid stability, energy reliability, and the resulting fluctuations in electricity prices \citep{barbeito_grid_stable, Ullah_grid_feq}. Addressing this challenge is essential to ensure a continuous and balanced energy supply that meets demand in real-time as is required for stable electricity grids.

Grid-scale battery energy storage systems (BESS) have emerged as a promising solution to address short-term variability in renewable electricity production. Batteries store excess energy generated during periods of high production and release it when generation is low, providing much-needed flexibility and stabilizing the grid \citep{barbeito_grid_stable, Ullah_grid_feq}. In addition, they help manage peak demand, minimize reliance on fossil fuel peaker plants, and boost overall energy system efficiency. The rapid increase in grid-scale storage installations underscores this potential \citep{saldarini_bess, irena2017electricity}, with the projected adoption of grid-scale BESS to increase significantly by 2030.

In liberalized electricity markets, the evaluation of viable business models for BESS is crucial to this transformation. The economic feasibility and revenue streams of these systems depend on their ability to participate in various short-term energy markets, engaging in energy arbitrage, frequency regulation, and demand response. Understanding the potential of these business models helps investors and legislators make the right decisions for a sustainable energy future and increases the adoption of storage technologies. In particular, informed trading decisions for intraday (ID) trading become increasingly relevant, with ID trading volumes growing each year \citep{koch_2019} and market participants shifting their trading towards short-term markets.

Finding good trading strategies requires extensive backtesting over long time horizons. However, backtesting the continuous ID market is computationally expensive. Its millisecond frequency generates millions of decision points, making long-term simulations challenging.

This paper introduces a highly realistic, high-frequency implementation of the rolling intrinsic trading strategy for battery energy storage systems. By combining the rolling intrinsic logic with a full limit order book representation and a fast dynamic programming approximation, this approach integrates the full granular evolution of the market with exchange regulations and physical storage constraints. As we argue below, literature on the subject is limited, as most authors adopt a rather stylized view of the intraday market, avoiding most of its complexity and thereby yielding suboptimal strategies and a biased picture of revenue potentials.

Most relevant related literature on coordinated bidding in electricity spot markets focuses on the day-ahead market, such as \citet{fk08}, or treats the intraday market (IDM) as a single trading decision stage \citep{ff11, lwm13, kraft_23}. Applications range from pure trading strategies to asset-focused approaches, such as for hydro plant or battery system operation. Some publications consider a small number of repeated trading decisions on intraday markets for storages \citep{lohndorf_value_2023} and other applications \citep{agp16, RaWo18, kuppelwieser2023intraday}.

Interest in short-term trading strategies for grid-scale BESS has surged recently, a topic explored by, e.g., \citet{Sage_2024, cornejo2025evaluatingimpactmodelaccuracy, zhang2025jointbiddingintradayfrequency}.

Literature that exclusively focuses on (automated) trading strategies for the intraday market, while capturing its full complexity, is scarce. Some recent work presents approaches for single-, or multi-market trading strategies for storage. \citet{jp15} study short-term trading for a battery storage on real-time markets as a Markov Decision Process. \citet{brauer_2019}, and more recently, \citet{seifert24}, present a European multi-market trading strategy for battery systems across three markets, trading at discrete time intervals and connecting the sequential reserve, day-ahead and intraday market decision stages. With a 4-hour time resolution, their ability to react to short-term changes in prices is very limited. Furthermore, the resulting smoothing of price spikes significantly reduces the profits a battery could extract from intraday markets. While co-optimizing storage assets across multiple markets (e.g., day-ahead, reserve, and intraday) is highly relevant in practice, integrating the continuous intraday market into multi-market frameworks currently requires heavy simplifications, such as artificially discretizing time into fixed decision stages. These simplifications severely distort the true profit potential of high-frequency trading. We focus exclusively on capturing the full complexity of the continuous intraday market because its rapidly growing volumes and high-frequency dynamics demand specialized, highly reactive algorithms. By isolating this market, we provide a highly accurate, fast-executing benchmark standalone model. We view understanding and accurately modeling this specific market as a critical foundational step before such algorithms can be expanded into broader multi-market environments without sacrificing intraday realism. Furthermore, sophisticated operation within the continuous intraday market is becoming increasingly critical for battery storage systems, as Europe's ancillary markets face saturation from the surging volume of installed battery capacity.

The myopic rolling intrinsic (RI) trading strategy is nearly optimal for gas storage operation \citep{lohndorf2021gas}. The strategy naturally translates to other storage types, like batteries. Consequently, the RI is frequently used as a benchmark for more sophisticated strategies \citep[e.g.,][]{bertrand_2020, boukas_2021}.

Recent work by \citet{semmelmann_2024} is most closely related to our work by presenting an evaluation of RI trading for a battery on the continuous intraday market. However, their study relies on aggregated transaction data rather than the full limit order book, missing critical aspects of the market's microstructure, and does not account for battery degradation costs. While we adopt a simplified linear degradation model in this paper to maintain computational tractability at high execution frequencies, explicitly incorporating these costs into the optimization maps trading decisions more realistically and prevents the over-utilization of the asset.% Additionally, accounting for battery degradation costs inside the optimization would also map trading decisions more realistically.

Generally speaking, there are only a handful of papers that model the intraday market in its full complexity and implement trading strategies that take into account the complete information in the order book and, at the same time, do not artificially discretize time. The few examples of such papers known to the authors include \cite{bertrand_2020, boukas_2021, kuppelwieser2023intraday}. However, all of these papers are based on trading rules that are simple to execute. To the best of our knowledge, there is no single study of the rolling intrinsic policy or any other policy requiring a complex optimization for every single change in the limit order book while modeling detailed market behavior and rules.

This paper contributes to closing this gap by making the following main contributions:
\begin{enumerate}
    \item We adapt the rolling intrinsic strategy to account for continuous trading on intraday markets for electricity, explicitly modeling every single order in the limit order book. The optimization is formulated as a mixed integer linear programming problem, taking into account all relevant aspects of the problem, including order placement costs, a simplified linear battery degradation model, and detailed market rules.

    \item We approximate the exact MILP formulation using a dynamic programming approach, effectively trading a marginal loss in theoretical precision for a computational speedup of two to three orders of magnitude. As a result, our work enables the first high-frequency algorithmic trading strategy for storage assets, enabling realistic trading at every relevant update of the order book. Furthermore, we provide a detailed comparison, in terms of speed and profit, of our method against the MILP solution.

    \item In a numerical study, we conduct a detailed backtest of our strategy in the German intraday market over a full year using an order-by-order traversal of historical order book data. The results show the revenue potential of a BESS operating on the continuous intraday market. 

    The computational speed of our proposed algorithm is essential for such a detailed analysis, and our findings demonstrate that the speed of trading is critical to maximizing profits. Slower strategies, or generally less frequent trading, generate significantly lower profits by missing numerous trading opportunities due to their inability to make decisions at every relevant point in time. Our high-frequency strategy traverses the full year 2021 in around 86 minutes, solving the intrinsic optimization approximately 24 million times (4.6 solves per millisecond), submitting around 30k orders at the exchange. Furthermore, we evaluate the robustness of our strategy under optimization and battery parameter settings, providing a deeper understanding of the method's performance and generalizability.

    \item We parametrize the standard formulation of the RI and show that our simulation speed can be used to easily find optimal parameters, generating additional profits of $8.5$ \%, by slightly nudging the RI's trading behavior at no additional risk.

    \item Lastly, we publish an easy-to-use Python package to run RI simulations over extended periods of time, given a set of battery and dynamic programming parameters. This will allow other researchers or industry to realistically simulate high-frequency rolling intrinsic trading scenarios for a battery on the intraday market, thereby creating a strong benchmark that can be evaluated with minimal effort.

    %\item Managerial and policy implications?
\end{enumerate}

The remainder of this paper is organized as follows: Section \ref{sec:power_market} offers an overview of the European Power Market, setting the stage for our proposed trading strategy, detailed in Section \ref{sec:method}. We present our results in Section \ref{sec:results}, and discuss the details and implications of our findings. Finally, Section \ref{sec:conclusion} concludes the paper and outlines directions for future research. 

%%%%%%%%%%%%%%%%%%%%%%%%%%%%%%%%%%%%%%%%
%%% 2. The Intraday Market for Power %%%
%%%%%%%%%%%%%%%%%%%%%%%%%%%%%%%%%%%%%%%%
\section{\label{sec:power_market}The European Power Market}
The European short-term electricity markets are served by two primary competing power exchanges, Nord Pool and EPEX SPOT. While Nord Pool originated in the Nordic region, both exchanges now operate across wide areas of Europe, including the Nordic and Central Western European (CWE) regions.
%The European electricity markets operate primarily within two major trading regions, each with their own dedicated exchange: the NordPool exchange, covering the Scandinavian region, and the EPEX exchange, which serves most of the rest of Europe. 

The spot markets, which include day-ahead and intraday markets, are key components of the European Single Day-Ahead Coupling (SDAC) and Single Intraday Coupling (SIDC) initiatives. These initiatives have unified market operations across Europe, enabling bids submitted in one bidding zone to be seamlessly integrated into a shared order book and traded across other zones, provided there is sufficient cross-border transmission capacity. Price differences between bidding zones are thus  primarily the result of constraints in interconnectors.

In this paper, we mainly focus on the German market, which is the largest electricity market in Europe and where trading is conducted mainly through the EPEX exchange. However, due to the ongoing convergence of the European electricity market designs, the proposed methods directly carry over to most other European markets. 

The cascading range of future markets in Germany spans from long-term markets to day-ahead and intraday markets trading hourly and subhourly contracts. In continuous intraday trading, $60$, $30$ and $15$ minute products are trading up to $5$ minutes before delivery.

\subsection{\label{ssec:intraday_market} The Intraday Market for Power}
The primary objective of short-term power market designs is to minimize system imbalances. The intraday (ID) market plays a crucial role as the final opportunity for market participants to adjust their positions in response to unforeseen changes in production and demand, often driven by updated weather forecasts. Participants aim to minimize contractual deviations and reduce potential balancing costs charged by the transmission system operator (TSO). Significant price spreads between the day-ahead and intraday markets create strategic opportunities. Participants can withhold volumes from the day-ahead market to trade them intraday for additional profit \citep{lohndorf_value_2023, seifert24}.

Intraday market designs in Europe vary by country, although there has been significant progress toward a more unified ID market structure through the Single Intraday Coupling (SIDC) initiative, which includes both continuous and recently also auction-based trading products \citep{epex2024idas}. In this paper we focus on the former. While intraday auctions clear accumulated orders at specific, discrete times at a uniform price, the continuous market matches incoming orders instantaneously against the limit order book, as discussed in detail in \citet{Neuhoff2016}. %Generally, two main trading mechanisms are employed for 15, 30 and 60-minute ID products: \textit{intraday auctions} and the \textit{continuous intraday market}. %These mechanisms are often combined to provide flexibility, liquidity, and efficiency in responding to short-term market dynamics.

Unlike the day-ahead market, which clears via a single daily auction and requires only one discrete trading decision per day, the continuous intraday market operates through continuous matching and requires frequent, high-speed rebalancing. While the continuous ID market typically exhibits lower liquidity at any given point in time compared to the DA auction, it features significantly higher price volatility. This combination of unpredictability and continuous trading makes the ID market structurally more challenging to navigate, but it also creates highly rewarding arbitrage opportunities for fast-responding flexibility providers like battery storage systems.

In Germany, the intraday continuous market opens in the afternoon following the clearing of the day-ahead market. Trading begins at 3 p.m.~on the day before delivery, allowing a maximum of 32 hourly or 128 quarter-hourly products to be traded at any given time. Trading for products in the shared SIDC order book closes $60$ minutes before delivery, while products listed solely in the German order book trade until $30$ minutes before delivery. At that point, the market splits into the four German TSO regions, in each of which trading continues until $5$ minutes before delivery.

Limit orders submitted to the exchange are either immediately matched with corresponding buy or sell orders or, if unmatched, are stored in the limit order book \citep[see][]{graf_frequent_2018}. A limit order is defined by its price, volume, validity, and order type (buy/sell). Figure \ref{fig:id_price} visualizes the process of a typical submission of a new limit order to the limit order book (LOB) of a product, i.e., delivery hour.

\begin{figure}[t]
    \centering
    \input{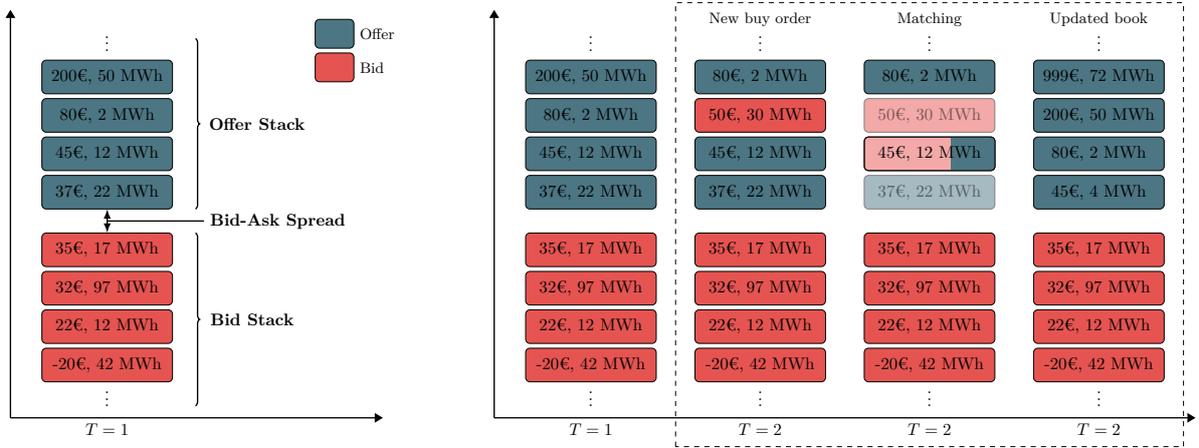}
    \caption{\label{fig:id_price} An exemplary state of the LOB at $T = 1$ presented in the left panel. The right panel depicts the clearing of a new hypothetical buy order: A new buy order with a price of 50 \texteuro/MWh, which is higher than the lowest ask price, is added to the book at $T = 2$. The quantity of this buy order (30 MWh) is then cleared against the cheapest possible offers until either the whole order is fulfilled (as is the case in the figure) or there are no offers with lower prices left. In this example, $22$ MWh out of $30$ MWh are cleared against the sell order with price 37 \texteuro/MWh \ and the remaining 30 MWh $- \ 22$ MWh $= 8$ MWh are cleared against the sell order with price 45 \texteuro/MWh. The remaining quantity of 12 MWh $- \ 8$ MWh $= 4$ MWh of the latter order stays in the order book. Note that the clearing is instantaneous, i.e., columns 2–4 in the right panel are purely illustrative and do not correspond to market states that can be observed by traders. Figure and caption adapted from \cite{graf_frequent_2018}.}
\end{figure}

Continuous markets enable participants to continuously update and optimize their orders, allowing for real-time adjustments to evolving market conditions and reactions to forecast-updates. This flexibility makes them a crucial element in automated short-term trading strategies, where a rapid response to price changes is essential. In addition, continuous markets offer traders the opportunity to exploit short-lived market inefficiencies, enhancing their ability to capitalize on arbitrage opportunities throughout the trading day.

\begin{figure}[t]
    \centering
    \includegraphics[width=0.6\textwidth]{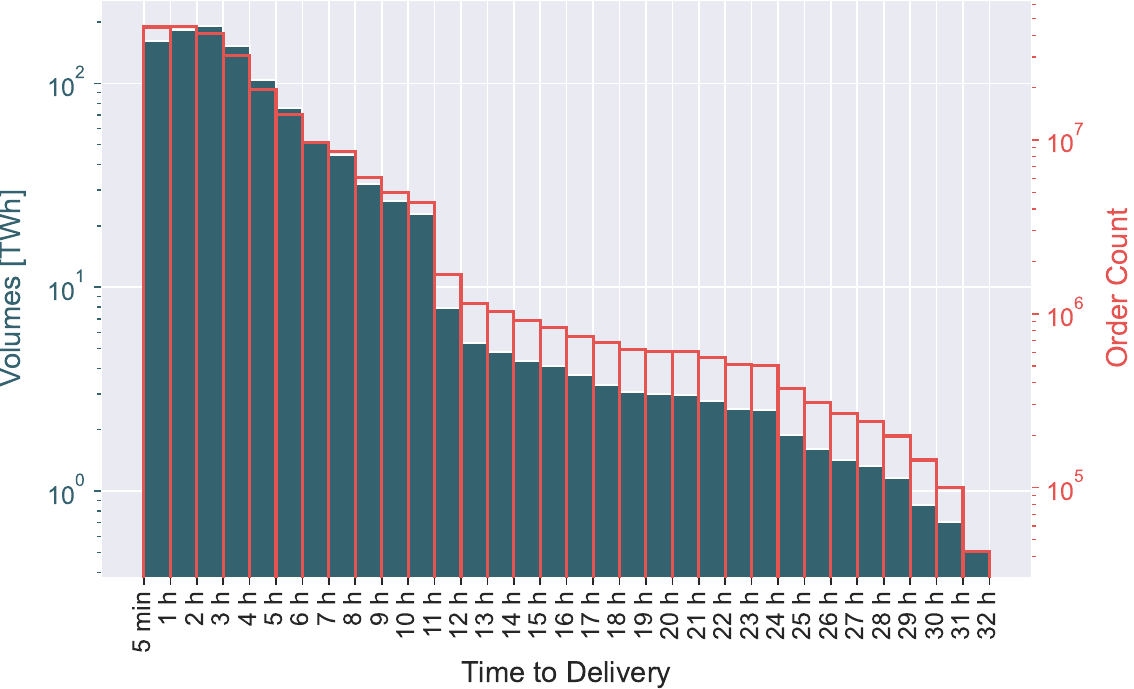}
    
    \caption{\label{fig:liquidity}The binned logarithmic distribution of time intervals between order submission and physical delivery for the German continuous intraday market in 2021. The teal bars illustrate the volume traded, while the red bars show the same data, counting the number of orders submitted. Both distributions exhibit an approximately exponential increase as the time to delivery shortens. Notably, there is a distinct gap between the 10-hour and 11-hour bins, where a significant increase in trading activity is observed.}
\end{figure}

A key challenge of trading in the continuous ID market is the lack of liquidity several hours before delivery. This results in wide bid-ask spreads and high price volatility. An example of this liquidity disparity is shown in Figure \ref{fig:liquidity}, which reveals that while a few hours before delivery typically thousands of orders are posted and cleared for a given product, this number drops to several hundred for products whose delivery is $10$ hours or longer away and to a handful of orders at the start of trading.

As explained in more detail in \citep{kuppelwieser2023intraday}, low liquidity far from delivery occurs because market participants lack precise information regarding their expected production and demand. As the delivery time approaches, weather forecasts and demand predictions become significantly more accurate. This increased certainty allows participants to take more precise balancing actions, which consequently drives up trading activity and liquidity.

Low liquidity creates difficulties for automated trading strategies that aim to capitalize on price differences throughout the day. For instance, evening products traded during the morning hours often fail to reflect prevailing market prices accurately, as they are largely influenced by market makers rather than real supply and demand dynamics. This issue of liquidity is the main disadvantage of the continuous market, compared to its auction alternatives. It reduces the efficiency of price discovery and also increases traders' risk, as they may experience suboptimal pricing and reduced profitability when locking in positions early, as discussed more in-depth in e.g. \citep{graf_frequent_2018}.

Finally, Table \ref{tab:epex_spot_volume} provides a concise summary of key statistics for the Central-Western European (CWE) Spot Market over recent years. The continuous market stands out as the dominant intraday trading platform. Specifically, the German continuous intraday market in 2021, which serves as the data source for this study, recorded 240.4 million order submissions (including order changes). This translates to an average of 7.6 orders placed per second. The sheer volume and frequency of transactions underscore the necessity of accurately modeling and responding to high-frequency market dynamics, to realistically simulate any trading strategy on this market. This calls for extremely fast and reactive trading. In the next section, we will propose an algorithm that can easily keep pace with the market's speed.

\begin{table}[t]
\centering
\resizebox{\textwidth}{!}{%
\begin{tabular}{lrrrrrrr}
\toprule
\textbf{} & \textbf{2023} & \textbf{2022} & \textbf{2021} & \textbf{2020} & \textbf{2019} & \textbf{2018} & \textbf{2017} \\
\midrule
\textbf{Day-Ahead}       & (+16.9\%) 419.3 & (-8.6\%) 358.8 & (-4.5\%) 392.7 & (-8.4\%) 411.1 & (+3.1\%) 448.8 & (+5.3\%) 435.2 & 413.2 \\
\textbf{ID-Auction}      & (+9.0\%) 9.41   & (-1.9\%) 8.63  & (+0.0\%) 8.8    & (+20.5\%) 8.8   & (+7.4\%) 7.3   & (+30.8\%) 6.8  &  5.2   \\
\textbf{ID-Continuous}   & (+29.9\%) 119.8 & (+8.3\%) 92.2   & (+10.1\%) 85.1  & (+23.5\%) 77.3  & (+11.4\%) 62.6 & (+11.3\%) 56.2 & 50.5  \\
\bottomrule
\end{tabular}
}
\caption{Combined EPEX Spot trading volumes in TWh for the CWE region (AT, BE, DE/LU, FR, NL) taken from the EPEX Spot annual reports \citep{epex_annual_reports}. The percentage-change to the respective last year is given in brackets. Clearly, intraday trading is gaining relevance compared to DA auctions, while the ID auction market is slowly picking up in activity.}
\label{tab:epex_spot_volume}
\end{table}

%%%%%%%%%%%%%%%%%%%%%%%%%%%%%%%%%%%%%%%%%%%%%%%%%%%%%%%%%%%%%%%%%%%%%%%
%%% 3. An Efficient Rolling Intrinsic Policy For Continuous Markets %%%
%%%%%%%%%%%%%%%%%%%%%%%%%%%%%%%%%%%%%%%%%%%%%%%%%%%%%%%%%%%%%%%%%%%%%%%
\section{An Efficient Rolling Intrinsic Policy for Continuous Markets} \label{sec:method}
In this section, we describe the intrinsic problem for continuous markets, discuss the rolling intrinsic policy as an improvement of the intrinsic strategy, and present an efficient implementation as a sequence of dynamic programming problems.

%%% Section 3.1: The Intrinsic Problem
\subsection{The Intrinsic Problem} \label{ssec:intrinsic}
The intrinsic strategy is a simple policy to maximize the rewards of a storage system, which is widely used in the literature on gas storage and in the industry to optimize storage assets \citep[e.g.][]{gray2004towards, GrKh04b, Lai10, Lai11, lohndorf2021gas}. In the extant literature, the intrinsic problem is set in a perfectly liquid futures market, trading $T \in \Nd$ futures of a commodity that can be stored and which trade for prices $P_t$. To not overcomplicate notation, we assume here that futures contracts are for delivery in consecutive time periods $[0,1), [1, 2), \dots, [T-1, T)$ and that there are no overlapping contracts. To fix ideas, one can imagine a battery storage system trading futures for delivery in the $24$ hours of the next day.
    
The storage operator decides about traded quantities $q_t$ (MWh) of these contracts such that volumes bought are transferred into the storage, while sold volumes are physically fulfilled from the storage. In particular, in the simplest case, the intrinsic optimization boils down to the following linear problem of optimizing the storage level $(s_t)_{t=1}^T$ (MWh) and the future positions $(f_t)_{t=1}^T$ (MWh) in the following way
\begin{align} \label{eq:simple_intrinsic}
    \begin{array}{lll}
        \underset{q_t}{\max} & \sum\limits_{t=1}^T -P_tq_t \\
        \text{s.t.} & f_{t} = f^0_{t} + q_t, \qquad & \forall t = 1, \dots, T \\
                    & f_t \in [\underline f, \bar f], & \forall t = 1, \dots, T \\
                    & s_{t} = s_{t-1} + f_t, & \forall t = 1, \dots, T \\
                    & s_{t} \in [0, \bar s], & \forall t = 1, \dots, T,
    \end{array}
\end{align} 
where $s_0$ and $(f_t^0)_{t=1}^T$ are the initial storage level and the initial future positions, respectively, which are data to the problem. In the above problem, positive $q_t$ represent buying decisions while negative ones represent sells. Furthermore, $\bar s$, $\underline f$, and $\bar f$ are physical bounds for the maximum storage level (MWh) as well as for withdrawal and injection (MWh). The respective constraints ensure that the position is \emph{physical} in the sense that traded quantities can be accommodated in the storage without violation of the limits on injection and withdrawal. 

The above linear problem is the simplest form of the intrinsic, which assumes a perfectly efficient storage, liquid markets, and no cost for operating the storage. Furthermore, the problem is myopic in the sense that the decision maker uses all the flexibility of the storage instantly without anticipating future changes in prices, which would make it more advantageous to wait before trading immediately. The myopic nature of the intrinsic policy makes it suboptimal and motivates its name originating from the \emph{intrinsic value} of a financial option contract, i.e., the value an option has when it is immediately exercised, not taking into account the value of waiting (the extrinsic value). Problem \eqref{eq:simple_intrinsic} is the version of the intrinsic problem prevailing in the literature.

Next we will extend \eqref{eq:simple_intrinsic} to a policy for a BESS trading on continuous intraday power markets, explicitly taking into account order book information, trading costs, and a simple linear model for efficiency and degradation losses of batteries. In order to do so, we reuse equation \ref{eq:simple_intrinsic}'s notation and introduce a set of futures contracts $\Tc_{t^*}$ at time $t^*$, for delivery of electricity in non-overlapping time periods in the future. The time when the earliest contract goes into delivery is denoted by $t_0$. Note that the formulation could easily be extended to futures contracts that overlap, such as quarter-hourly and hourly contracts. For simplicity, in the following we only focus on the intrinsic at a singular time $t^*$, denoting $\Tc_{t^*}$ simply as $\Tc$.

For each contract $t \in \Tc$, there is an order book $\Oc_t = \Oc_t^+ \cup \Oc_t^-$ with $\Oc_t^+$ containing all the currently active asks and $\Oc_t^-$ containing all the currently active bids for contract $t$. For each of these orders $i \in \Oc_t$, we define a limit price $P_i$ and a quantity $Q_i>0$. Within this framework, the optimization evaluates the exact historical limit order book to determine which orders to match. By design, the objective function naturally prioritizes the most competitive prices, always taking the lowest available asks and highest available bids, since selecting a less favorable price for a given volume would yield a mathematically suboptimal result. Furthermore, we introduce efficiency factors $\eta^+$, $\eta^- \in (0,1]$ for charging and discharging, respectively, as well as a volume-based trading cost $\nu_{\text{trade}}$ for matching orders on the market. 

Batteries have a limited life. Roughly speaking, there are two effects causing battery degradation: calendar age, i.e., the loss of capacity with time even without active use of the battery, and cycling, i.e., the wear and tear caused by injection and withdrawal of energy. Since we are looking at short-term planning, we focus on the latter and model a symmetric linear cycling (i.e. degradation) cost $\nu_\text{deg}$ in \texteuro/MWh of injected/withdrawn energy. The intuition behind this choice is that cells will have to be replaced after a certain number of cycles, which induces a replacement cost. With this information, we calculate a proportional per cycle and ultimately a per MWh cost for using the battery. We note that actual degradation depends not only on the amount of cycled energy but also on the depth of discharge, temperature, and several other factors \citep[see e.g.][]{xu_factoring_2017, graef_2022}{}. However, for the sake of simplicity, we focus on the amount of cycled energy, which is the most important factor. Combining the cost of trading and the cost of degradation, we define $\nu = \nu_{\text{trade}} + \nu_\text{deg}$ as the combined variable cost in units of \texteuro/MWh. By acting as a friction term in the objective function, this combined cost ensures that the battery only executes trades where the market price spread exceeds the cost of physical wear and tear. Consequently, the degradation model actively prevents marginal trades, which reduces the total executed trading volume and theoretical profits, but yields a realistic and economically viable dispatch schedule.

Until this point, all discussed extensions of \eqref{eq:simple_intrinsic} can still be modeled as a linear program (LP). However, there are two aspects of intraday trading that introduce non-linearities. Firstly, trade quantities on electricity exchanges are not continuous variables but multiples of the minimum trading unit $u$, which implies that they are of the form $q_i = k_i u$ with $k_i \in \Nd_0$. Secondly, efficiencies smaller than $1$ coupled with negative prices make it profitable to spend energy by simultaneously charging and discharging, which is physically not possible for a battery. More specifically, for a single product $t$ and orders $i \in \Oc^+_t$, $j \in \Oc^-_t$, buying a quantity $q$ using order $i$ and immediately re-selling the power $\eta^- (\eta^+ q)$ that remains after accounting for efficiency losses using order $j$ yields a positive profit if
\begin{equation}
    -q P_i + \eta^- (\eta^+ q)  P_j > 0 \Leftrightarrow  (\eta^- \eta^+) P_j > P_i.
\end{equation}

Since $P_i > P_j$ by definition, this can only happen if both prices are negative. However, because it is physically not possible to charge and discharge the battery at the same time, such trades have to be prevented.

With these preparations, we can write the modified intrinsic as the following mixed integer linear problem (MILP):
\begin{align}
\label{eq:intrinsic_vanilla}
       \underset{f_t, s_t, \alpha_t, q_i, k_i}{\max} \quad & \sum\limits_{t \in \Tc} \left( \sum\limits_{i \in \Oc_t^-} (P_i - \nu) q_i - \sum\limits_{i \in \Oc_t^+} (P_i+\nu) q_i \right) \\
            \text{s.t.} \quad	& 0 \leq q_i \leq Q_i, & \forall i \in \Oc_t, \forall t \in \Tc \notag\\
                                    & q_i = k_i u, & \forall i \in \Oc_t, \forall t \in \Tc \notag\\
                                    & f_t^+ = \sum_{i\in \Oc_t^+} q_i, & \forall i \in \Oc_t^+, \forall t \in \Tc \notag\\
                                    & f_t^- = \sum_{i\in \Oc_t^-} q_i, & \forall i \in \Oc_t^-, \forall t \in \Tc \notag\\
                                    & f_{t} = f^0_{t} + f_t^+ - f_t^-, \qquad & \forall t \in \Tc \notag\\
                                & f_t \in [\underline f, \bar f], & \forall t \in \Tc \notag\\
                                    & f_t = i_t - w_t, & \forall t \in \Tc \notag\\
                                    & i_t \in [0, \alpha_t \bar f], & \forall t \in \Tc \notag\\
                                    & w_t \in [0, -(1-\alpha_t) \underline f], & \forall t \in \Tc \notag\\
                                & s_{t} = s_{t-1} + \eta^+ i_t - \frac{1}{\eta^-} w_t, & \forall t \in \Tc \notag\\
                                & s_{t} \in [0, \bar s], & \forall t \in \Tc \notag \\
                                    & \alpha_t \in \{0,1\}, \; k_i \in \Nd , \; \nu \in \mathbb{R}_{\geq 0}& \forall i \in \Oc_t, \forall t \in \Tc \notag
\end{align}
where the $\alpha_t$ are binary variables which are equal to $1$ in case the decision for contract $t$ is to buy and $0$ if the decision is to sell, and $i_t$ ($f_t^+$) and $w_t$ ($f_t^-$) are the combined buy and sell positions per product including (excluding) previous positions.

Note that, due to the above discussion, strictly speaking, the binary variables $\alpha_t$ are only required for hours $t$ where both $\Oc_t^+$ and $\Oc_t^-$ contain orders with negative prices. Since $u$ is usually rather small and negative prices do not occur too often, typically linear relaxations of the above problems are good and yield near-optimal solutions in the root node of the branch-and-bound trees, making the problem a comparably easy MILP in most realistic cases.

%%% Section 3.2: The Rolling Intrinsic
\subsection{The Rolling Intrinsic}
\begin{algorithm}[t]
    \begin{algorithmic}[]
        \State \textbf{Input}: Initial storage $s_0$, start-time $t_{start}$ and stop-time of simulation $t_{end}$, limit order book information $\Tc_{t^*} \ \forall t^* \in [t_{start}, t_{end}]$, solve-frequency of intrinsic, other parameters $(\bar f,\underline f, \nu, \dots)$
        
        \State \textbf{Output}: Final battery schedule during simulation runtime, cumulated trade profits
        
        \For{time instant $t^* \in [t_{start}, t_{end}]$}
            \State Solve the intrinsic according to equation \eqref{eq:intrinsic_vanilla}

            \State Update all positions $f_t, \forall t \in \Tc_{t^*}$

            \State Log profits/losses from trading
        \EndFor
    \end{algorithmic}
    \caption{\label{alg:RI} The rolling intrinsic}
\end{algorithm}

The rolling intrinsic policy (RI), as outlined in Algorithm \ref{alg:RI} was originally introduced in \cite{gray2004towards} and is a dynamic extension of the intrinsic value discussed in the last section. Starting from an initial storage state $s_0$ and market positions $(f_t)_{t \in \Tc}$ acquired in previous periods, the RI repeatedly checks for chances of profitable rebalancing by re-running the intrinsic policy. Although the resulting decisions are still myopic, the RI clearly represents an improvement over the static intrinsic policy that does not adapt positions at all. Furthermore, the myopic nature of the decisions has the advantage that the RI does not speculate but only enters immediately profitable positions and therefore does not run the risk of accumulating losses. For these reasons and because of its conceptual simplicity and relatively low computational cost, the RI has gained widespread industry adoption.

While the policy remains strictly myopic, its computational simplicity is precisely what allows it to be evaluated at extremely high frequencies. In a continuous market, the ability to instantly capture fleeting, short-lived price spreads through rapid execution often yields an operationally stronger and more profitable strategy than theoretically more sophisticated but computationally slow and practically unimplementable models.

%%% 2.3 A Dynamic Programming Formulation of the Rolling Intrinsic Policy
\subsection{A Dynamic Programming Formulation of the Rolling Intrinsic Policy} \label{ssec:intrinsic_dp}
In this section, we describe how to reformulate problem \eqref{eq:intrinsic_vanilla} to a dynamic programming formulation that can be solved by an efficient implementation of the dynamic programming algorithm. The aim of the reformulation is to make faster decisions to keep up with the pace of the intraday market, where typically many orders arrive per second, requiring constant rapid reevaluation of one's position.

The main idea of the reformulation is to assign each product $t \in \Tc$ a stage in which the decisions to accept orders $i \in \Oc_t$ are taken while the storage level is the state variable that connects the stages. Hence, in the language of dynamic programming, the decisions $q_i$, $k_i$, $\alpha_t$, and $f_t$ are the actions, and the resulting storage state $s_{t}$ is the argument of the value function of the next stage. The solution of the intrinsic itself remains a single-step optimization, executed at a specific point in time. To solve this high-dimensional decision problem efficiently, we break up the logic of the underlying mixed-integer linear program into individual products. Mathematically, each delivery hour is mapped to a distinct "stage" of the DP. Note that this construction is artificial, as the DP stages do not represent different points in time when sequential decisions are made, but merely different parts of a single trading decision that is taken all at once.

We assume without loss of generality that $\Tc_t = \{1, \dots, T\}$ set $V_{T+1} \equiv 0$ and for $t = 1, \dots, T$ define the value function as
\begin{align}
    V_t(s_{t-1}) = \left\{  \begin{array}{lll} 
       \underset{f_t, s_t, \alpha_t, q_i, k_i}{\max} \quad & \multicolumn{2}{l}{\sum\limits_{i \in \Oc_t^-} (P_i - \nu) q_i - \sum\limits_{i \in \Oc_t^+} (P_i + \nu) q_i + V_{t+1}(s_t)}\\
            \text{s.t.} \quad	& 0 \leq q_i \leq Q_i, & \forall i \in \Oc_t \\
                                & q_i = k_i u, & \forall i \in \Oc_t \\
                                & f_t^+ = \sum_{i\in \Oc_t^+} q_i, & \forall i \in \Oc_t^+ \\
                                & f_t^- = \sum_{i\in \Oc_t^-} q_i, & \forall i \in \Oc_t^- \\
                                & f_{t} = f^0_{t} + f_t^+ - f_t^- \label{eq:intrinsic_dp_long}\\
                                & f_t \in [\underline f, \bar f]\\
                                & i_t \in [0, \alpha_t \bar f] &  \\
                                & w_t \in [0, -(1-\alpha_t) \bar f] &  \\
                                & s_{t} = s_{t-1} + \eta^+ i_t - \frac{1}{\eta^-} w_t & \\
                                & s_{t} \in [0, \bar s]\\
                                & \alpha_t \in \{0,1\}, \; k_t \in \Nd, \nu \in \mathbb{R}_{\geq 0}.&
    \end{array} 
    \right.
\end{align}
This formulation might not seem very natural, since in reality the decisions in all stages of the above problem happen simultaneously instead of sequentially as is usually the case in dynamic programs. Because of this, existing studies and industry practice almost exclusively rely on monolithic MILP or LP formulations, which offer the most intuitive and direct mathematical mapping of the battery's physical constraints. However, reformulating this simultaneous problem into a multi-stage dynamic program is the key to unlocking significantly faster computation times. As we will demonstrate in Section \ref{sec:results}, this novel perspective allows us to solve the intrinsic optimization up to three orders of magnitude faster than a standard MILP, executing several solves per millisecond. This speedup is not merely a convenience; it is a strict prerequisite for realistically tracking and reacting to the millisecond-pace of the continuous limit order book.

We propose solving the problem by an application of the dynamic programming algorithm, discretizing the actions $f_t$ into steps $\kappa \cdot u$ ($\kappa \in \Nd$) apart, where $u$ is the minimal tradable quantity. Furthermore, we introduce a function $\pi_t$ that encodes the current order book information and returns the cost or revenue of buying or selling on the intraday market, taking into account trading costs. Furthermore, we define a state transition function 
\begin{equation}
    S(s_{t-1}, f_t) = \begin{cases} s_{t-1} + \eta^+ f_t, & \text{if } f_t > 0 \\ s_{t-1} + \frac{f_t}{\eta^-}, & \text{otherwise.} \end{cases}
\end{equation}
%Note that the function $\pi_t$ encodes the current order book information.

With these preparations, we can now re-write problem \eqref{eq:intrinsic_dp_long} as
\begin{equation} \label{eq:intrinsic_dp}
        V_t(s_{t-1}) = \left\{ \begin{array}{ll} 
           \underset{k_t}{\max} \quad & \multicolumn{1}{l}{\pi_t(f_t^0 - f_t) + V_{t+1}(S(s_{t-1}, f_t))}\\
            \text{s.t.} \quad & f_t = f_t^0 + k_t u \\
                              & k_t \in \left[-\frac{\eta^-s_{t-1} + f_t^0}{u}, \frac{\bar s - s_{t-1}}{\eta^+ u} - \frac{f_t^0}{u}\right] \cap \left[\frac{\underline{f} - f_t^0}{u}, \frac{\bar f - f_t^0}{u}\right] \cap  \kappa\Zd,
                              %& k_t \in \left[ -(\eta^-s_{t-1} + f_t^0), \frac{\bar s - s_{t-1}}{\eta^+} - f_t^0 \right] \cap \\
                              %& \left[ f_t^0 + \underline{f}, \bar f - f_t^0 \right] \cap (\Zd \cdot v),
        \end{array} 
        \right.
\end{equation}
where the bounds in the first constraint $k_t$ enforce the storage energy limits, while the second interval enforces the power limits. Given $V_{t+1}$, solving \eqref{eq:intrinsic_dp} boils down to evaluating the objective for all feasible $k_t$, i.e., to a small number of arithmetic operations.

Note that the above reformulation of \eqref{eq:intrinsic_dp_long} to \eqref{eq:intrinsic_dp} automatically takes care of the issue of simultaneously accepting buy and sell orders in the case of negative prices, and, by fixing the trading step-size $u$ to a multiple of the minimum bid-size, naturally yields feasible position sizes.

To solve \eqref{eq:intrinsic_dp}, we need to know the value functions $V_{t+1}$. In line with the usual dynamic programming algorithm, we calculate $V_t$ going backward in time: $V_{T+1}$ is the known boundary condition, which makes it possible to calculate $V_T$ in any given state. However, due to efficiencies $\eta^\pm < 1$, the trades that live on a finite grid translate to storage levels that do not sit on a regular grid, especially after several rounds of injections and withdrawals. 

Therefore, unlike classic applications of the dynamic programming algorithm, we do not define $V_T$ to have a finite domain, but instead on the whole interval of possible storage states $[0, \bar s]$. In order to obtain the value function, we discretize the possible storage states $[0, \bar s]$ to a finite grid $G \subset [0, \bar s]$ with $\lvert G \rvert = m$. We then approximate the value function $V_{T}$ by evaluating it on the grid and linearly interpolating for points $s \notin G$. Thus, the parameter $m$ represents the number of discrete grid points we use to model the possible state of charge of the battery within our dynamic programming approach. A smaller storage discretization, such as $m=11$, means we evaluate the value function at fewer points, resulting in faster computation times but a slightly less accurate approximation of the optimal strategy. 

In particular, we denote the resulting approximations by $\tilde V_{t+1}$ and describe our linear approximation between the neighboring discretized gridpoints $s_i, s_{i+1} \in G$ as follows:
\begin{equation}
    \tilde{V}_{t+1}(s_t) = \frac{m(s_{i+1} - s_t)}{\bar s}V_{t+1}(s_i) + \frac{m(s_t - s_i)}{\bar s}V_{t+1}(s_{i+1}),
\end{equation}
for $s_{i} \leq s_t \leq s_{i+1}$. More sophisticated interpolations, taking into account, for example, the curvature of the value function (given by the buy and sell price stacks), could potentially enhance the precision of our optimization. However, since such approaches would be computationally more costly and the obvious choices, such as quadratic interpolations or cubic splines, did not show any improvements in our numerical experiments, we remain with the linear approximation above.

Once we have obtained $\tilde V_T$, we can solve the problem that defines $V_{T-1}$ with $V_T$ replaced by $\tilde V_{T}$. We then repeat the process until we have obtained the approximations $\tilde V_t$ of all the value functions. Note that, in principle, $G$ can be chosen independently for every $t$ and can also change between various solutions of the intrinsic problem. For the sake of simplicity, we keep the grid constant, i.e. $G_{t^*,t} \equiv G$. Thus our DP's computational complexity grows linearly as $\mathcal{O}(\frac{|T|\cdot |G|}{\kappa})$.

Algorithm \ref{alg:DP_RI_algo} displays an overview of one such dynamic program solution run of the intrinsic problem at time $t^*$, where $\Xc_t$ is the set of all possible actions $k_t$ as defined in equation \eqref{eq:intrinsic_dp}.
\begin{algorithm}[t]
    \caption{DP at runtime $t^*$}
    \label{alg:DP_RI_algo}
    \begin{algorithmic}[]
        \State \textbf{Input}: $f_t^0$,  $\Oc_t$ for all tradable products $t \in \Tc$, other parameters $(\kappa, s_0, \nu, G, \dots)$
        \State \textbf{Output}: Updated market positions $f_t$
        %\State $V_t(s_i) \gets - \inf, \quad \forall s_i \in G$ and $\forall t \in \Tc$.
        \State Set $V_{T+1} \equiv 0$.
        \Statex
        \State \textbf{Backwards Pass}
        \For{$t=T, \dots, t_0$}
            \For{each state $s \in G$}
                \State $V_t(s) \gets \max_{k_t \in \Xc_t(s, f_t^0)} \set{ \pi_t(k_tu) + \tilde V_{t+1}(S(s, k_tu))}$
            \EndFor
        \EndFor
        \Statex
        \State \textbf{Forward Pass}
        \For{$t=t_0, \dots, T$}
            \State $\hat{f_t}\gets \argmax_{k_t \in \Xc_t(s, f_t^0)} \set{\pi_t(k_tu) + \tilde V_{t+1}(S(s_{t-1}, k_tu))}$
            \State $f_t^0 \gets \hat{f_t}$
            \State $s_t \gets S(s_{t-1}, \hat{f_t})$
        \EndFor
    \end{algorithmic}
\end{algorithm}

The approximation of the value functions $\tilde V_t$ introduces an error relative to the exact MILP formulation of the problem. In the following proposition, we show that the domain of the true functions $V_t$ is not actually continuous and that consequently there is a grid $G$, which makes the approximation and thereby the DP formulation of the problem exact.
\begin{prop}
    For all $t \in \set{1, \dots, T}$, there is a finite grid $G_t$ that contains all possible storage states at the beginning of period $t$. Choosing $G$ to approximate the value functions alleviates the need for interpolation and makes the approximation exact.
\end{prop}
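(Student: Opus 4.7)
The plan is to establish, by induction on the stage index $t$, that the set of storage levels reachable at the start of stage $t$ is finite, and then to take $G$ large enough to contain every such reachable state so that every DP transition lands exactly on a grid point, eliminating interpolation error.

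First, I would observe that the DP formulation \eqref{eq:intrinsic_dp} restricts each flow to $f_t = f_t^0 + k_t u$ with $k_t \in \kappa \Zd$ subject to the bounded interval $[\underline f, \bar f]$, so $f_t$ only takes finitely many values. Writing $f_t = i_t - w_t$ with $i_t, w_t \geq 0$ and using the binary switch $\alpha_t$ that rules out simultaneous charging and discharging, the per-period storage change $\eta^+ i_t - w_t/\eta^-$ takes values in the finite set
\begin{equation*}
    \Delta = \{\eta^+ j \kappa u : j \in \Zd_{\geq 0}, \; j \kappa u \leq \bar f\} \cup \{-j \kappa u/\eta^- : j \in \Zd_{\geq 0}, \; j \kappa u \leq -\underline f\}.
\end{equation*}

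Second, I would set $G_1 = \{s_0\}$ and recursively define $G_{t+1} = (G_t + \Delta) \cap [0, \bar s]$, where the sum is the Minkowski sum. Each $G_t$ is finite because it is built from finite sets, and by construction any storage state reachable at the start of stage $t$ by a feasible DP policy belongs to $G_t$. Letting $G = \bigcup_{t=1}^{T+1} G_t$ yields a single finite grid containing every reachable state across all stages. With this choice, the transition $S(s, k_t u)$ defined in Section \ref{ssec:intrinsic_dp} maps any $s \in G_t$ and feasible action back into $G_{t+1} \subseteq G$. Since the piecewise linear interpolation defining $\tilde V_{t+1}$ reproduces $V_{t+1}$ exactly at grid points, $\tilde V_{t+1}(S(s, k_t u)) = V_{t+1}(S(s, k_t u))$ for every action considered by the DP, so the backward recursion computes $V_t$ exactly on the reachable portion of $G$ and the forward pass produces an optimizer of the monolithic problem \eqref{eq:intrinsic_dp_long}.

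The main obstacle I anticipate is mostly bookkeeping: one must keep the asymmetric scaling by $\eta^+$ and $1/\eta^-$ separate when enumerating $\Delta$, confirm that clipping to $[0, \bar s]$ is consistent with the storage bounds inside \eqref{eq:intrinsic_dp}, and verify that exactness is preserved under the backward recursion rather than only along feasible forward trajectories. It is worth remarking, even if it does not weaken the statement, that the cardinality of $G_t$ may grow combinatorially in $t$, which is precisely why the practical algorithm relies on a much coarser grid together with interpolation rather than on the exact grid constructed here.
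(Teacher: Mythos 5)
Your proposal is correct and takes essentially the same approach as the paper: a forward-in-time reachability induction from $s_0$, using the fact that each stage's storage increment lies in a finite set determined by the trading grid $\kappa u$ and the efficiencies $\eta^\pm$, so that all reachable states form a finite grid on which the piecewise-linear approximation is exact. The only cosmetic difference is that the paper parametrizes the reachable sets directly by cumulative injection/withdrawal counts, $G_t = \set{s_0 + u k_t^+ \eta^+ - u k_t^-/\eta^- : k_t^\pm \in \Kc_t^\pm}$, under the same simplification ($f_t^0 = 0$) whose necessity you flag, whereas you build them recursively as clipped Minkowski sums $(G_t + \Delta) \cap [0, \bar s]$; the two constructions produce the same finite sets, and your explicit backward-induction argument for exactness spells out a step the paper only asserts.
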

\begin{proof}
    Suppose that the initial state of charge (SoC) is $s_0$, we set $\kappa=1$ and look at the simplest case of $f_t^0 = 0$ for all $t \in \Tc$. The possible storage states at the beginning of stage $t_0+1$ are
    \begin{equation}
        G_{t_0+1} = \set{s_0 + uk_1^+\eta^+ - \frac{uk_1^-}{\eta^-}: k_1^\pm\in \Kc_1^\pm},
    \end{equation}
    and $\Kc_1^+=\set{ k\in \Nd_0: k\leq u^{-1} \min \left(\frac{\bar s-s_0}{\eta^+}, \bar f \right)}$ and $\Kc_1^-=\set{ k\in \Nd_0: k\leq u^{-1}\min \left( s_0\eta^-, -\underline f \right)}$.

    Defining the sets $G_t$ and $\Kc_t^\pm$ going forward in time, for period $t > t_0$ we thus have, given that $\Kc_{t-1}^{\pm}$ is already known, 
    \begin{equation}
        G_t = \set{s_t=s_0 + uk_t^+\eta^+ - \frac{uk_t^-}{\eta^-}: k_t^\pm\in \Kc_t^\pm, 0\leq s_t \leq \bar s}
    \end{equation}
    with 
    \begin{align}
        \Kc_t^+&=\set{k \in \Nd_0: k \leq \frac{(\bar s - s_0) + u\max (\Kc_{t-1}^-)(\eta^-)^{-1}}{u \eta^+}} \\
        \Kc_t^-&=\set{k \in \Nd_0: k \leq u^{-1} \eta^-\left( s_0+\eta^+u\max (\Kc_{t-1}^+)\right)}.
    \end{align}
    Note that the way $G_t$ and $\Kc_t^\pm$ are defined takes care of repeated charging and discharging in periods $t_0, \dots, t$ and therefore covers all possible storage states.
\end{proof}
Choosing the finite grids described above would ensure that the DP solution is exact and equivalent to the MILP formulation, as the value function only needs to be evaluated at points in the grid and there is no need for approximation. This comes at the cost of computational efficiency, as choosing $G$ to be exact, would yield a very fine grid for cases where there are many tradable products $\Tc$ and $\eta^\pm < 1$.%In contrast, using an alternative grid $ \Tilde{G}_n \neq G_n $ introduces deviations from the optimal solution. The extent of these deviations depends on the difference between $\Tilde{G}_n$  and $G_n$, as well as the interpolation method used.
The formulation presented above adds to the extant literature in two important ways:
\begin{enumerate}
    \item \textbf{Full LOB Granularity:} Unlike standard rolling intrinsic implementations that rely on aggregated price curves, transaction data, or simplified mid-market prices, our approach explicitly models the \textit{full limit order book dynamics}. This allows the policy to capture liquidity constraints and depth-dependent pricing exact to the historic microstructure level.
    
    \item \textbf{High-Frequency Feasibility via DP:} The reformulation from a generic MILP to a specialized dynamic programming scheme reduces the computational burden by orders of magnitude. This speedup is not merely an efficiency gain, it is a structural prerequisite that enables the backtesting of a strategy that re-optimizes on a millisecond timescale over a full year of market tick-data, a task that would be computationally intractable with standard solvers.
\end{enumerate}

\subsection{From Optimization to Execution}

The optimal position changes $\hat{f_t}$ derived from the forward pass must be translated into actionable limit orders. To mitigate execution risk in a fast-moving market, we submit these quantities as fill-or-kill orders. This ensures that the strategy only trades a position if the full desired volume is available at the expected price, preventing partial fills that could lead to suboptimal inventory states. However, due to the latency $\Delta t$ between the optimization start and order arrival at the exchange, market conditions may shift, potentially leaving orders unmatched or resulting in positions that, while optimal at $t^*$, are no longer physically feasible given the market's state at $t^* + \Delta t$. In such cases, the rolling nature of the policy corrects these deviations in the subsequent optimization step, naturally steering the system back toward feasible, profitable operation.

Furthermore, it is important to emphasize that while our algorithmic trading strategy operates at a high financial frequency, this does not strictly equate to high-frequency physical cycling of the battery. Frequent re-evaluations of the market state primarily allow the algorithm to adjust the timing of financial positions as the order book evolves. However, re-optimizing a myopic schedule in a volatile continuous intraday (CID) market can influence the physical dispatch, and thus the wear-and-tear of the battery, in three primary ways:
            
\begin{enumerate}
    \item \textbf{Shifting volume:} The algorithm might shift the traded electricity between hours (e.g., moving delivery to a neighboring hour) to capture a better price. This keeps the total physical volume constant, though it requires the original positions to be unwound.
    \item \textbf{Increasing volume:} The optimization may encounter new, independent price spreads, not affecting previous positions. Capitalizing on these new spreads layers additional charge and discharge actions onto the existing schedule, strictly increasing the battery's physical cycles.
    \item \textbf{Reducing volume:} If market prices move such that an originally profitable spread collapses or inverts, the optimal action may be to only unwind the existing positions, thereby reducing the scheduled physical cycles.
\end{enumerate}

In practice, any combination of these effects can occur for a given optimization instance. Depending on market conditions, total physical cycles might therefore increase or decrease as a result of high-frequency re-optimization. Consequently, our combined linear degradation and trading costs naturally act as necessary friction terms against weak market signals. This prevents the algorithm from constantly chasing marginal price movements, thereby regulating the schedule revisions and limiting the battery's actual physical wear and tear.

By contrast,  alternative reduced-order market formulations, such as volume-weighted average prices, are computationally lighter and allow for richer battery models, as they frequently overestimate revenues by ignoring strict limit order book liquidity constraints. By operating directly on the limit order book, our framework prevents the policy from attempting to clear energy volumes that do not physically exist at the expected price. Furthermore, our strategy provides a clear, directly implementable policy on how and when the battery's orders should be submitted to the market and how one must dynamically correct missed positions to avoid physical imbalances.

% \subsection{Concavity of Value Function}
% To save computational time, one can make use of the fact, that the value function, as defined in equation \ref{eq:dp_ri}, is concave. This allows to cut off the search for actions, once their value decreases. 

% As concavity is preserved under addition, we only need to show that the left summand in equation \ref{eq:dp_ri} is concave. This term represents the monetary reward obtained of a single delivery time $t$ for changing the current position from $b_t^0$ to $b_t$. When buying electricity from the grid, this reward is negative, when selling to the grid, the reward is positive.

% Starting at the null action $b_t = b_t^0$ and separately searching outwards towards the maximum withdrawal and injection rates, we can stop the search, once the reward for a subsequent action is smaller than the previous one, as buying/selling more volume is only possible at equal or worse prices in a price-sorted LOB.

%%%%%%%%%%%%%%%%%%%%%%%%%%%%%%%%
%%% 4. Results %%%%%%%%%%%%%%%%%
%%%%%%%%%%%%%%%%%%%%%%%%%%%%%%%%
\section{\label{sec:results}Results}
In this section, we present the results of a numerical case study. In Section \ref{ssec:case_study_setting}, we discuss the setting and implementation of the case study. In Section \ref{ssec:dp_vs_milp}, we measure the precision of our DP solution and compare the runtimes with the exact MILP formulation, arguing that the DP leads to runtimes that are orders of magnitude faster. In Section \ref{ssec:yearly_comp}, we present the results of our policy for a whole year of trading and measure the impact of trading speed on performance, while in Section \ref{ssec:param_eval}, we examine how changes in storage and optimization parameters affect the results.

%%% 4.1 Setting and Mode of Comparison
\subsection{Setting and Mode of Comparison} \label{ssec:case_study_setting}
Unless otherwise stated, for our case study, we use a BESS with an energy-to-power ratio (duration) of 1 hour, set $\bar f = - \underline f = 10$ MWh (i.e., a power of 10 MW) and $\bar s = 10$ MWh, and assume efficiency losses of $\eta^+ = \eta^- = 0.95$, i.e., a round trip efficiency of $\eta^+\eta^- \approx 0.9$ similar to \cite{cole2023battery, bess_revenue_index_1h}. 

We use linear degradation costs of $\nu_\text{deg} = 4$ \texteuro/MWh, i.e., a round-trip degradation cost of $8$ \texteuro/MWh for every MWh of electricity sold by the battery. We motivate this value by \citet{goldmansachs_2023} who project the undiscounted cost of battery replacement in 2030 to be around $60$k \texteuro / MWh, a 10-year battery warranty, and approximately $2$ charging cycles per day. Dividing the replacement costs by cycles gives the cost of one full cycle, which allows us to infer the induced \texteuro /MWh linear degradation cost: $8 \approx 60k / (2 \cdot 365 \cdot 10)$.

Furthermore, we choose $\nu_\text{trade} = 0.09$ \texteuro/MWh, reflecting the current EPEX volume-based trading fees per each matched order. In total, the variable cost $\nu = \nu_{\text{trade}} + \nu_\text{deg}$ thus takes the value of $\nu = 4.09$ \texteuro/MWh. This variable, symmetrically applied, cost $\nu$ is always present in the optimization, while $\nu_\text{trade}$ is deducted from profits after each transaction. The degradation cost $\nu_\text{deg}$ is subtracted only once, after each product's final schedule is set at gate closure.

All results are generated using historical 2021 order book data from the EPEX Spot continuous intraday market for Germany, focusing on hourly products. It is not possible ex post to simulate iceberg orders correctly given the EPEX dataset, so we omit them in our simulations. The exchange allows for price increments of $0.01$  \texteuro  \ and volume increments at $0.1$ MWh, defining the minimum action size on the market as $u = 0.1$ MWh. The discretization of the dynamic programming action is defined at the most accurate minimum value of $\kappa = 1$ throughout.

Clearly, in real-life trading, there will always be some delay between two consecutive optimizations. We explicitly model this delay $\Delta t$ and conceptually split it into two components: the \textit{solve-time} and the \textit{technical delay}:

\begin{figure}[h!]
    \centering
    \begin{tikzpicture}
% draw horizontal line   
\draw[thick, -Triangle] (0,0) -- (6,0);%node[font=\scriptsize,below left=3pt and -8pt]{};

% draw vertical lines
\draw (0 cm,3pt) -- (0 cm,-3pt);
\draw (2 cm,3pt) -- (2 cm,-3pt);
\draw (5 cm,3pt) -- (5 cm,-3pt);

\node[font=\scriptsize, text height=1.75ex,
text depth=.5ex] at (0,-.3) {$t^*$};
\node[font=\scriptsize, text height=1.75ex,
text depth=.5ex] at (5,-.3) {$t^*+\Delta t$};

% colored bar up
%\draw[-Triangle, dashed, red] (5,.5) --  +(1,0);

\draw[customRed, line width=4pt] (0,.5) -- +(2,0);
\draw [thick ,decorate,decoration={brace,amplitude=5pt}] (0,0.7)  -- +(2,0) 
       node [black,midway,above=4pt, font=\scriptsize] {Solve Time};

\draw[customGreen, line width=4pt] (2,.5) -- +(3,0);
\draw [thick ,decorate,decoration={brace,amplitude=5pt}] (2,0.7)  -- +(3,0) 
       node [black,midway,above=4pt, font=\scriptsize] {Technical Delay};

% colored bar down
%\draw[-Triangle, dashed, green] (6,-.7) --  +(1,0);

% braces
%\draw [thick,decorate,decoration={brace,amplitude=5pt}] (6,-.9) -- +(-1,0)
%       node [black,midway,font=\scriptsize, below=4pt] {Testing period};

\draw [thick,decorate,decoration={brace,amplitude=5pt}] (5,-.7) -- +(-5,0)
       node [black,midway,font=\scriptsize, below=4pt] {Waiting Duration};

\end{tikzpicture}
    %\caption{Visualization of the waiting time $\Delta t$ between the start of the optimization at $t^*$ and the time of fill-or-kill limit order matching at the exchange $t^* + \Delta t$.}
    \label{fig:delays}
\end{figure}
\vspace{-1em} 
The solve time represents the time required to find an optimal solution for a given intrinsic problem. The technical delay accounts for various operational latencies, such as communication delays with the exchange, delays caused by internal database and trading systems, and other post-solve delays before the submitted orders are matched on the exchange.
If the combined delay exceeds $1$ ms (the tick duration of the EPEX Spot market), this has two effects on trading: Firstly, there is the risk that an order placed by our policy may not be matched since its intended counterpart either expired or was cleared against other market participants' orders in the meantime. We deal with this by submitting all-or-none orders that expire if they are not immediately cleared, which might lead to non-physical positions, i.e., market positions that cannot be matched by a feasible schedule of injections and withdrawals. In such a situation, further trades are required to correct the position so that the final battery schedule becomes physically feasible. However, in the way the intrinsic problem is set up, such positions would be corrected in the next call of the intrinsic problem, forcing the RI to return to a physical position as quickly as possible.

The second issue arising from delays is that the policy might not be able to react to every single order book update. In particular, we assume that the next intrinsic is solved only at $t^*+\Delta t$ observing the updated order book state at this time. Hence, if there are updates to the order book in the time interval $(t^*, t^*+\Delta t)$, the policy will not be able to react immediately to it.
We give a short overview of our chosen standard parameters in Table \ref{tab:baseline_params}, and provide a graphic overview of the simplified execution logic in Figure \ref{fig:flowchart}.
\begin{table}[h!]
    \centering
    \footnotesize
    \begin{tabular}{lccc}
        \toprule
        \textbf{Parameter} & \textbf{Symbol} & \textbf{Value} & \textbf{Unit} \\
        \midrule
        \multicolumn{4}{l}{\textit{Battery Storage Specs}} \\
        Energy Capacity & $\bar{s}$ & 10 & MWh \\
        Battery Duration & -- & 1 & h \\
        Efficiency (Charge/Discharge) & $\eta^+, \eta^-$ & 0.95 & -- \\
        \midrule
        \multicolumn{4}{l}{\textit{Costs \& CID Market}} \\
        Trading Fee & $\nu_{\text{trade}}$ & 0.09 & \texteuro/MWh \\
        Degradation Cost (Linear) & $\nu_{\text{deg}}$ & 4.00 & \texteuro/MWh \\
        Min. Tradable Unit & $u$ & 0.1 & MWh \\
        Simulated Year & -- & 2021 & -- \\
        \midrule
        \multicolumn{4}{l}{\textit{Optimization Settings}} \\
        Action Discretization & $\kappa$ & 1 & -- \\
        Storage Discretization Size & $|G|$ & 11 & -- \\
        \bottomrule
    \end{tabular}
    \caption{Baseline simulation parameters for the BESS, the intraday market environment and the DP.}
    \label{tab:baseline_params}
\end{table}

\begin{figure}[h!]
    \centering
    \resizebox{0.6\textwidth}{!}{\begin{tikzpicture}[
    font=\sffamily, 
    market_box/.style={
        rectangle, rounded corners=4pt,
        draw=box_color!80!black, fill=box_color, text=white,
        align=center, minimum height=1.3cm, minimum width=4.5cm,
        drop shadow={opacity=0.2, shadow xshift=1pt, shadow yshift=-1pt}, 
        font=\sffamily\bfseries\small
    },
    us_box/.style={
        rectangle, rounded corners=4pt,
        draw=bridge_red, fill=white, text=bridge_red, line width=1.5pt,
        align=center, minimum height=1.6cm, minimum width=5.5cm, 
        drop shadow={opacity=0.2, shadow xshift=1pt, shadow yshift=-1pt}, 
        font=\sffamily\bfseries\small 
    },
    interface_box/.style={
        rectangle, rounded corners=4pt,
        draw=black!70, fill=black!10, text=black!90, line width=1pt,
        align=center, minimum height=1.2cm, minimum width=4.5cm,
        drop shadow={opacity=0.2, shadow xshift=1pt, shadow yshift=-1pt},
        font=\sffamily\bfseries\small
    },
    flow_arrow/.style={
        ->, >=stealth, line width=1.4pt, color=arrow_color, rounded corners=4pt
    },
    data_arrow/.style={
        ->, >=stealth, line width=1.2pt, color=arrow_color, dashed, rounded corners=4pt
    },
    lbl/.style={
        inner sep=3pt, font=\sffamily\small, text=black!90, align=center
    }
]

% Nodes
\node[market_box] (update) at (0, 8.5) {Market Update\\[-4pt]Arrives};
\node[us_box] (reopt) at (0, 6.5) {High Frequency DP\\[-4pt]Reoptimizes Schedule};
\node[interface_box] (submit) at (0, 4) {Submit Orders\\[-4pt]to Market};
\node[interface_box] (wait) at (0, 2) {Waiting Duration\\[-4pt](Latency $\Delta t$)};
\node[market_box] (match) at (0, 0) {Orders Matched\\[-4pt]or Missed};

% Vertical Flow Arrows
\draw[flow_arrow] (update) -- (reopt);
\draw[flow_arrow] (reopt) -- (submit) node[midway, right=2pt, lbl] {Profitable trades\\[-4pt]identified};
\draw[flow_arrow] (submit) -- (wait);
\draw[flow_arrow] (wait) -- (match);

% Loop back from match to update (New SOC & Market State)
\draw[flow_arrow] (match.west) -- ++(-1.0,0) coordinate (left_turn) 
    -- (left_turn |- update.west) node[midway, left=2pt, lbl] {Updated Battery\\[-4pt]Schedule} 
    -- (update.west);

% Alternate path if no trades are required - Right arrow brought in closer
\draw[data_arrow] (reopt.east) -- ++(0.5,0) coordinate (right_turn) 
    -- (right_turn |- update.east) node[midway, right=2pt, lbl] {No trades\\[-4pt]required} 
    -- (update.east);

\end{tikzpicture}}
    \caption{Flowchart of our method's execution logic.}
    \label{fig:flowchart}
\end{figure}

Finally, we remark that when backtesting, we correctly account for changes in the order book caused by our actions. This in particular means that EPEX orders cleared against the orders of our policy are available shorter than in the actual history, and consequently that subsequent clearing of orders by other market participants changes. This introduces a strong path-dependency of varying trading policies.

Although we take great care to model the market correctly, a limitation of the backtesting experiments is that, by the very nature of our analysis and the available data, we cannot take into account the effect that the orders placed by the strategy would have had on the behavior of other market participants. However, this drawback is inherent in the idea of backtesting and cannot be easily corrected.

All simulations were implemented in C\texttt{++} and run on a single AMD EPYC 9654 CPU, with the MILP solved using Gurobi v11.0.3 for C\texttt{++} \citep{gurobi}. Due to the simple nature of each single intrinsic optimization, solving the MILP on multiple threads does not lead to faster simulation times. Additionally, to provide a fair comparison, we set our MILP solver's time-limit to 2 seconds (MIP gap 10\%) and warmstart each optimization with the null-action, i.e. the solution of not changing the current battery schedule.  We publish our code in the Python package \textit{BitePy}, which allows users to easily run our DP simulations over a Python interface. See \ref{sec:code} for details. 

%%% 4.2 Comparison between MILP and DP Rolling Intrinsic
\subsection{Comparison between MILP and DP Rolling Intrinsic} \label{ssec:dp_vs_milp}
To evaluate the effectiveness of solving the intrinsic rolling problem using the DP approach instead of the MILP approach, we start by comparing the two methods. The MILP consistently provides the exact solution to the optimization problem \eqref{eq:intrinsic_vanilla}, while the DP solution generally slightly diverges. Note that, due to the RI's myopic nature and the path dependence of differing solutions, following the exact solution does not necessarily guarantee larger profits. For this reason, although inexact in solving single intrinsic problems, the DP-based rolling intrinsic strategy can be more profitable than the MILP formulation, as evidenced by the February and April results in Table \ref{tab:milp_dp_artifical}.

We choose the most clinical setting for our comparison, by artificially fixing the solve time and technical delay to $0$ ms for both approaches. This results in a policy that ignores technical delays and solves the intrinsic optimization for each relevant order book update in such a way that the resulting all-or-none limit orders are matched instantaneously with the exchange after each solve.

Table \ref{tab:milp_dp_artifical} reports this comparison for the first week of February, April, July, and October 2021. All three DP methods use a constant equidistant grid $G$, with storage discretizations $m=101$, $m=51$ and $m=11$. In summary, we can say that even without delays, the DP achieves comparable rewards to MILP and significantly outperforms it in terms of runtime by a factor of 100-1000 for the coarser storage discretization $m=11$. While the DP simulation times remain constant, MILP simulation times vary, even though the number of intrinsic solves is almost equal, as the complexity of the MILP problem is highly influenced by current market conditions. Interestingly, a finer storage discretization does not provide a clear advantage over coarser discretizations in this experiment, but shows an increase in profits for our yearly simulation results, as evidenced by the following sections.

Bootstrapping the test results reveals no significant difference ($p\approx0.66$) in the hourly (per-product) rewards generated by the MILP compared to our DP methods.

Ultimately, the proposed DP method trades a mathematically negligible single-step approximation error for a computational speedup of up to three orders of magnitude, all while maintaining the exact same long-term financial performance as the traditional exact MILP approach.

\begin{table}[ht]
    \centering
    \footnotesize

    \renewcommand{\arraystretch}{0.85} % Adjust this value to control row spacing ( < 1 reduces)
    %\resizebox{\textwidth}{!}{% % Keep this commented unless needed
    \begin{tabular}{c lrrrrr} % Added 'c' column at the beginning
    \toprule
     & % Empty header for the new first column
       & \textbf{Reward} & \textbf{Sim-Time} & \textbf{Cycles/Day} & \textbf{Solves} & \textbf{Traded Vol.} \\
     & % Empty header for the new first column
       & [\texteuro] & [h]  & & & [MWh]  \\
    \midrule
    % Use nested tabular for line break inside rotatebox
    \multirow{4}{*}{\rotatebox{90}{\textbf{\begin{tabular}{@{}c@{}}Feb\\[-2pt]01-14\end{tabular}}}}
      & MILP & 7895 & 6.77 & 2.3 & 959900 & 1901 \\
      & DP-101 & 7538 & 0.43 & 2.3 & 959900 & 1825 \\
      & DP-51 & 8051 & 0.22 & 2.2 & 959800 & 1905 \\
      & DP-11 & 7851 & 0.05 & 2.2 & 959500 & 1878 \\[0.5pt] % Add 2pt space before the rule
    \cmidrule(lr){2-7} % Rule now spans columns 2 to 7
    \addlinespace[0.5pt] % Add 2pt space after the rule
    % Use nested tabular for line break inside rotatebox
    \multirow{4}{*}{\rotatebox{90}{\textbf{\begin{tabular}{@{}c@{}}Apr\\[-2pt]01-14\end{tabular}}}}
      & MILP & 9487 & 15.81 & 2.5 & 733000 & 1856 \\
      & DP-101 & 9434 & 0.34 & 2.5 & 733100 & 1817 \\
      & DP-51 & 9353 & 0.17 & 2.5 & 733000 & 1792 \\
      & DP-11 & 9543 & 0.04 & 2.5 & 733200 & 1872 \\[0.5pt] % Add 2pt space before the rule
    \cmidrule(lr){2-7} % Rule now spans columns 2 to 7
    \addlinespace[0.5pt] % Add 2pt space after the rule
    % Use nested tabular for line break inside rotatebox
    \multirow{4}{*}{\rotatebox{90}{\textbf{\begin{tabular}{@{}c@{}}Jul\\[-2pt]01-14\end{tabular}}}}
      & MILP & 11565 & 17.07 & 2 & 901000 & 1948 \\
      & DP-101 & 11161 & 0.43 & 2 & 901000 & 1873 \\
      & DP-51 & 11158 & 0.22 & 2 & 900900 & 1884 \\
      & DP-11 & 11181 & 0.05 & 2 & 901000 & 1899 \\[0.5pt] % Add 2pt space before the rule
    \cmidrule(lr){2-7} % Rule now spans columns 2 to 7
    \addlinespace[0.5pt] % Add 2pt space after the rule
    % Use nested tabular for line break inside rotatebox
    \multirow{4}{*}{\rotatebox{90}{\textbf{\begin{tabular}{@{}c@{}}Oct\\[-2pt]01-14\end{tabular}}}}
      & MILP & 28553 & 56.93 & 2.6 & 1012900 & 3350 \\
      & DP-101 & 28134 & 0.47 & 2.6 & 1012500 & 3231 \\
      & DP-51 & 27952 & 0.24 & 2.5 & 1012600 & 3190 \\
      & DP-11 & 28014 & 0.06 & 2.5 & 1012600 & 3236 \\ % No extra space needed before \bottomrule
    \bottomrule
    \end{tabular}%
    %} % Keep this commented unless needed
    \caption{Performance comparison for four representative weeks in 2021 between the rolling intrinsic solved using the benchmark MILP and three DP approaches with different approximations due to their value function granularities. All results are rounded to significance.}
    \label{tab:milp_dp_artifical}
\end{table}

%%% Section 4.3: Yearly High-Frequency Rolling Intrinsic Trading
\subsection{Yearly High-Frequency Rolling Intrinsic Trading}
\label{ssec:yearly_comp}
\begin{figure}[t]
    \centering
    \includegraphics[width=1.0\textwidth]{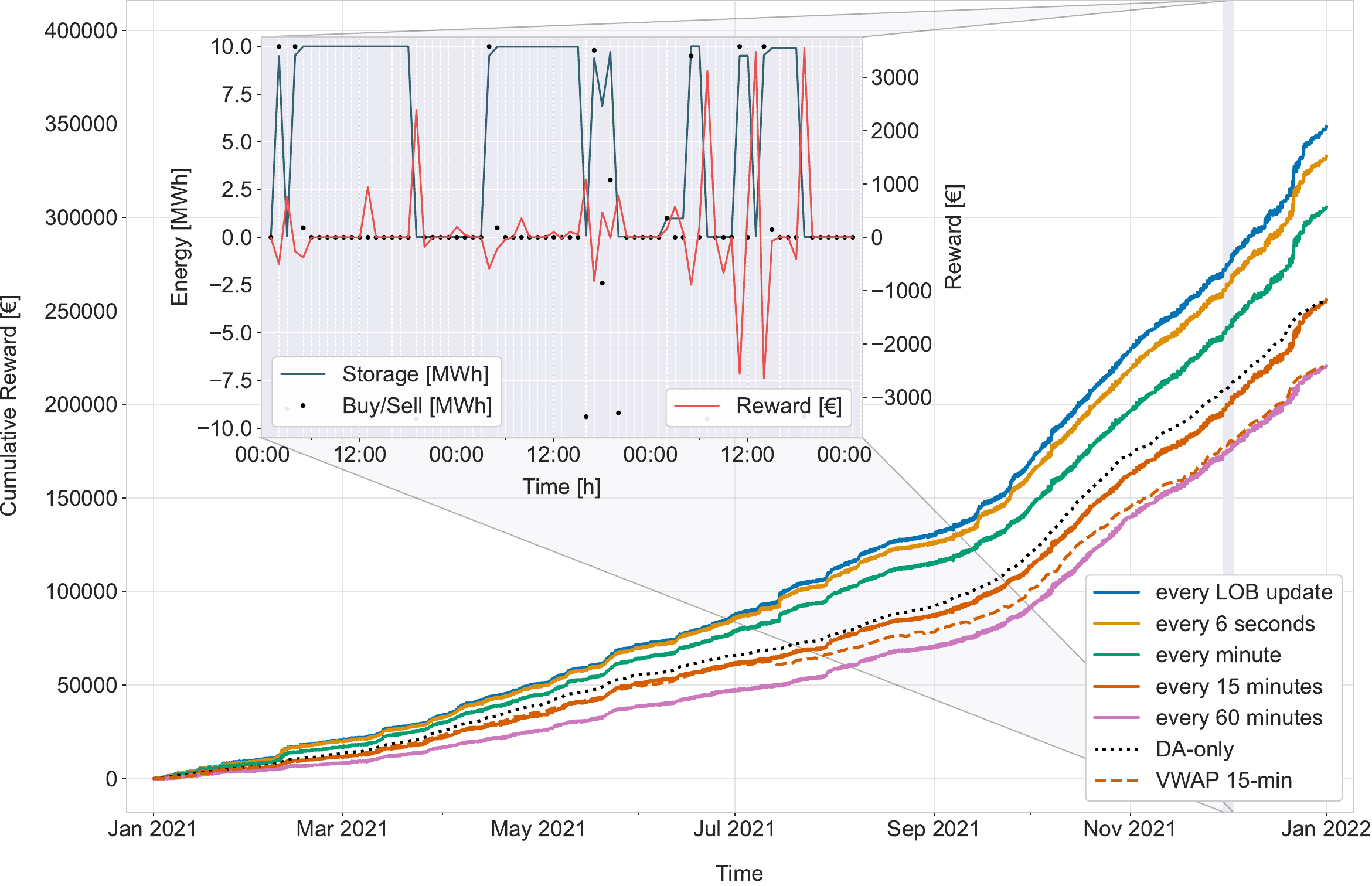}
    \caption{The cumulative reward of the intrinsic rolling over the entire year 2021 for various intrinsic-solve frequencies, compared against two benchmark strategies. The inset shows the battery operation over the span of three arbitrarily selected days for the strategy that solves the intrinsic at every relevant LOB update (taking into account technical delays), with the final schedule determined by the RI.}
    \label{fig:year_std_result}
\end{figure}
Leveraging the speed of our proposed DP method, we present full-year DP-RI trading results under a standard battery and simulation setup for the year 2021 in Figure \ref{fig:year_std_result}, and compare them against two materially different, implementable industry benchmarks: The first is a static Day-Ahead (DA) benchmark, which optimizes the battery schedule once per day based on the cleared DA auction prices. The second is an implementable continuous intraday benchmark based on Volume-Weighted Average Prices (VWAPs). Adapted from a state-of-the-art industry battery index \citep{isea_battery_revenue_index}, this strategy calculates VWAPs from transaction data in trailing 15-minute intervals and optimizes the schedule based on these aggregated price curves at that frequency. To ensure a fair and consistent multi-day evaluation, we adapted the published index by optimizing across all actively tradable products without artificial end-of-day state-of-charge constraints, integrating our linear degradation costs, and utilizing an MILP solver to correctly handle negative prices and enforce minimum bid-size increments. To make this strategy implementable on a continuous market, we submit the resulting target positions as market orders to the actual historical limit order book at the time of solution (starting trading at 11 pm for the next day's delivery to avoid the extreme illiquidity present at the 3 pm gate opening).

Figure \ref{fig:year_std_result} illustrates the cumulative rewards of these benchmarks against our proposed LOB-based strategy. When the VWAP strategy is forced to interact with the actual LOB, rather than assuming infinite liquidity at the theoretical average price, it systematically overestimates available volumes and suffers from significant price slippage upon execution. While the LOB strategy re-optimizing every 15 minutes yields profits roughly comparable to the static DA benchmark, our high-frequency LOB strategies demonstrate a clear performance advantage.. By accurately modeling depth-dependent pricing and capturing short-lived opportunities through millisecond-level reoptimization, our fastest LOB strategy outperforms the realistically implemented VWAP benchmark by more than 36\% in realized annual profits, highlighting the critical value of detailed microstructural modeling of the continuous intraday market.

The DP settings used for the comparisons are outlined in Section \ref{ssec:case_study_setting} with a value function discretization setting of $m = 11$ and a technical delay of $200$ ms plus measured actual solve times. We evaluate performance starting from the baseline with a solve-frequency of $60$ minutes, i.e. solving the intrinsic optimization once every hour, which results in a reward of \texteuro{$221$k} (8.7k orders submitted at the exchange) and a simulation runtime of $5.6$ minutes, compared to the strategy that solves it with every relevant order book update (apart from the waiting durations introduced previously) yielding a reward of \texteuro{$349$k} (30k orders submitted at the exchange) and a simulation runtime of $86$ minutes. We define any order message submitted to the LOB as relevant, if it is placed at the head of its corresponding bid/ask stack. These results clearly demonstrate that solving the intrinsic at higher frequencies leads to progressively increasing profits as the trading frequency rises. In this case, a $58\%$ increase between our slowest and quickest solve frequency. We observe a clear increase in the slope of the cumulative reward starting in October 2021. This most likely originates from higher and more volatile gas prices in Europe during this time, which subsequently translates to more extreme intraday electricity prices, which increases the revenue potential of storage.

To extend the information presented in Figure \ref{fig:year_std_result}'s inset, Figure \ref{fig:inset_decisions} depicts the detailed sequential decisions made by our policy on the last day of the inset. Looking closely at the overlapping dots, one can observe that some trading decisions are only a few hundred milliseconds apart.

\begin{figure}[h!]
    \centering
    \includegraphics[width=\textwidth]{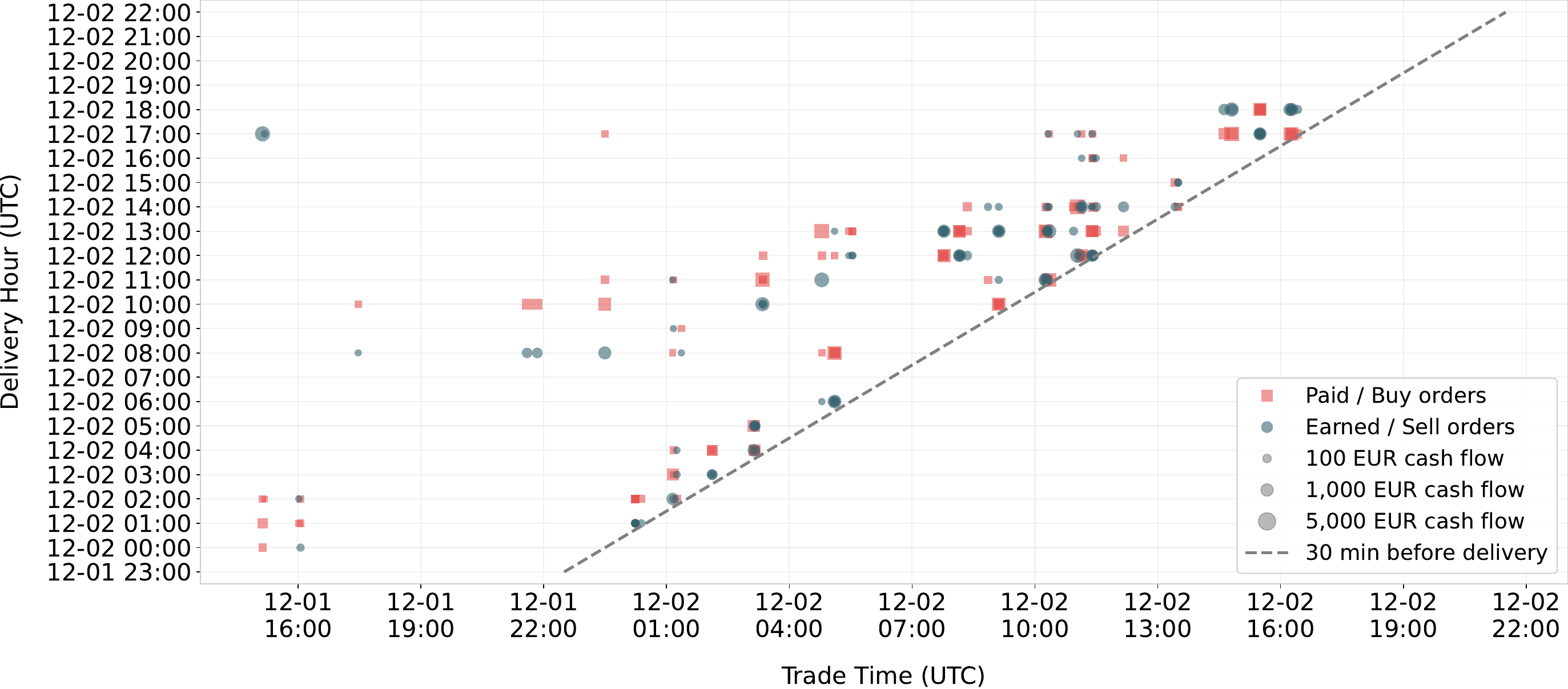}
    \caption{Detailed trading decisions for products traded on 2021-12-02 (Berlin time), which is the last day depicted in Figure \ref{fig:year_std_result}'s inset.}
    \label{fig:inset_decisions}
\end{figure}

An analysis of the daily profit distribution reveals a highly favorable, asymmetric risk profile inherent to the myopic rolling intrinsic policy. The algorithm effectively capitalizes on extreme scenarios characterized by high price volatility, resulting in a heavy right tail of highly profitable days (most notably in late 2021). Conversely, the strategy is virtually immune to extreme daily losses. During periods of low liquidity or unfavorable prices, the policy simply refrains from trading rather than speculating.

Figures \ref{fig:year_heatmap} and \ref{fig:execOrders_distance} illustrate DP-RI decisions, using the same settings and our quickest strategy, solving the intrinsic at every relevant LOB update. Figure \ref{fig:year_heatmap} provides an overview of the annual battery state-of-charge (SoC) schedule shaped by the RI’s trading decisions, along with the daily trading volumes. We observe a distinct average SoC pattern, suggesting a strong daily periodicity of arbitrage opportunities, with higher charge levels during the early morning and afternoon hours. This can be seen by looking at the trading decisions in the central heat map of trading decisions as well as the mean SoC profile. 

The lower plot shows the frequency of different SoC levels throughout the year. Given that our storage has a duration of one hour, one would expect the storage to be either completely full or completely empty most of the time. However, the SoC occupies levels in between these extremes surprisingly often. This is most likely caused mainly by complex trading behavior induced by nonlinearities in order book-based trading, which makes simple bang-bang strategies suboptimal. This highlights the difference between a naive price-taking implementation of the rolling intrinsic and our more realistic approach.  

Lastly, when looking at the daily traded energy over the year on the top of the figure, we see increased trading activity towards the end of the year, which fits the increased slope of cumulative profits for this time period observed in Figure \ref{fig:year_std_result}.

\begin{figure}[h!]
    \centering
    \includegraphics[width=1.0\textwidth]{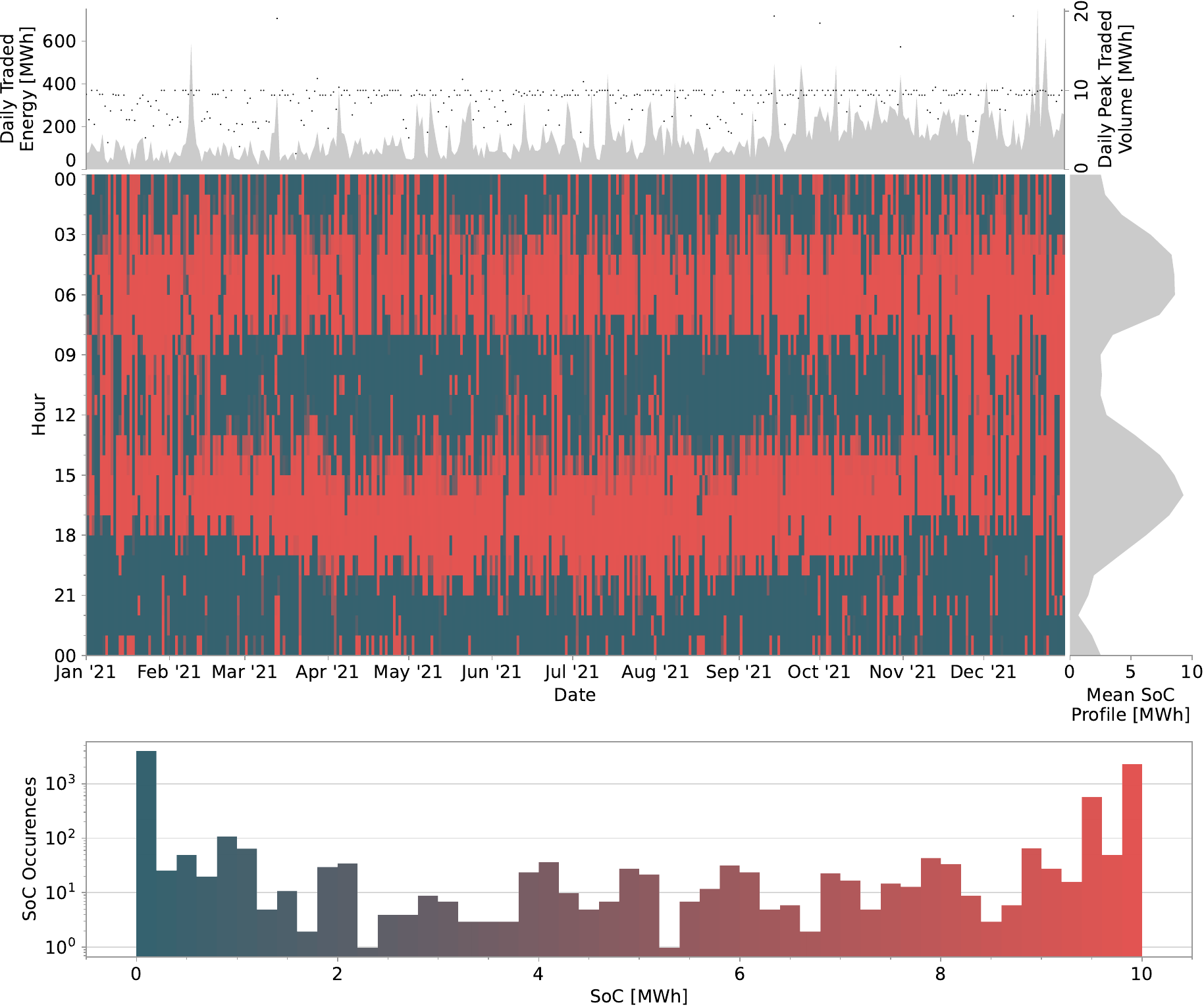}
    \caption{Yearly heatmap of the final battery schedule for the year 2021 determined by the RI. The upper plot shows the final state of charge (SoC) of the battery for each hour of the year, with the color gradient defined in the lower plot. The gray areas show the total traded energy per day and the yearly mean SoC profile (avg. SoC for each hour 1-24 of the day across the entire year), while the dots in the upper inset represent the largest trade per day submitted by the RI. We observe a clear preference for charging the battery in the early morning and afternoon hours. The lower plot shows the logarithmic distribution of SoC states of the battery, as determined by the RI's schedule. Figure adapted from \citep{brudermueller2023smart}.}
    \label{fig:year_heatmap}
\end{figure}

Figure \ref{fig:execOrders_distance} shows the number of orders executed by the RI, classified by their proximity to physical delivery. Comparing this to Figure \ref{fig:liquidity} showing the number of orders placed in the market, we see that the pattern is much more uniform. This is likely due to the fact that the rolling intrinsic tries to capitalize on time spreads which often induces trading pairs of products with large differences in the time to delivery, e.g., when buying cheap in the early morning to sell expensive in the evening. This induces the RI to frequently trade products that are not yet liquid, which is one of the biggest weaknesses induced by the myopic nature of the policy.
%Comparing this to Figure \ref{fig:liquidity} reveals a strong correlation between the timing of RI order submissions and continuous intraday market liquidity.

\begin{figure}[t]
    \centering
    \includegraphics[width=0.6\textwidth]{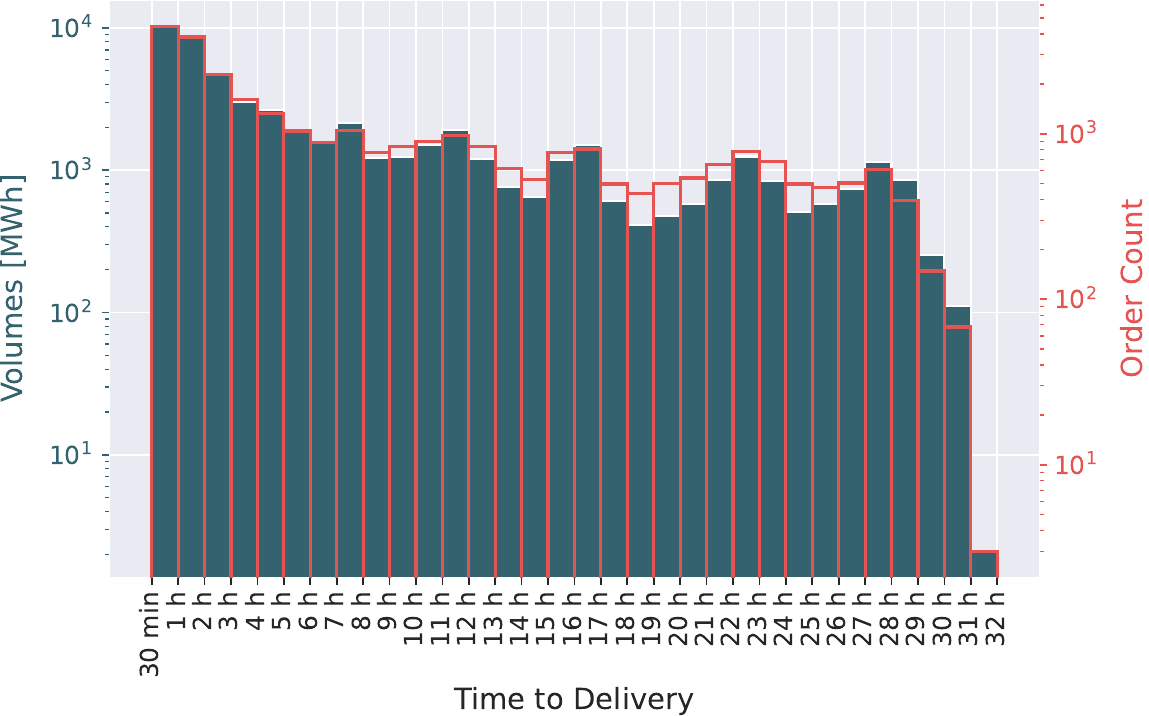}
    \caption{Logarithmic order submission-time analysis of the RI simulated over the entire year 2021.}
    \label{fig:execOrders_distance}
\end{figure}

Finally, as discussed in Section \ref{ssec:case_study_setting}, market latency can lead to execution failures for our submitted limit orders. When an order is missed because the order book has already shifted during the combined solve time and technical delay, the algorithm may temporarily record a non-physical position that it must correct in the immediately following optimization step. 
Table \ref{tab:simulation_results_missed} details the frequency, time-to-resolution, and financial impact of these feasibility corrections over the full-year simulation. Across different dynamic programming discretizations, approximately 7\% of the orders were missed due to latency. The rolling policy naturally resolved these infeasibilities rapidly, with a median resolution time of 0.3 seconds. However, these missed opportunities and subsequent corrections carry a tangible cost, reducing the total theoretical simulation profits by roughly 6\% to 7\%. The consistency of these statistics across different state grid sizes ($m=|G|$) aligns with expectations, as the measured solve time for all discretizations remains far below the fixed 200 ms technical delay, making the latter the main reason for missing orders. 
Isolating the exact financial penalty of a missed order is non-trivial, as subsequent re-optimizations might replace the order, abandon it, or combine it with new trades, sometimes even at more favorable prices. To estimate the foregone profit impact reported in Table \ref{tab:simulation_results_missed}, we employ a volume-proportional scaling approximation. If an order fails to match, we examine the replacement orders placed in the immediate next solve that share the same trade direction. We then scale the cost comparison by the volume share of the missed order. For example, if the algorithm misses a 1.0 MWh buy order at 10 \texteuro/MWh, and the subsequent solve proposes a 1.3 MWh buy order at 11 \texteuro/MWh, we estimate the correction loss based on the original 1.0 MWh volume: $((1.3 \times 11)\frac{1.0}{1.3} - 1.0 \times 10) = 1$ \texteuro.

This method provides a realistic approximation of the friction introduced by execution latency without attempting to track ambiguous cascading correction-intents across multiple future solves. We acknowledge, however, that this volume-proportional scaling is a heuristic that introduces a degree of uncertainty. Due to the highly path-dependent nature of the rolling intrinsic policy, a missed order does not simply force a 1-to-1 replacement trade. It can occasionally cause the algorithm to abandon a trade entirely or shift its strategy to different delivery hours. Consequently, the true opportunity cost of a missed order could be slightly higher or lower than our estimate. Nevertheless, because this approximated latency penalty, which accounts for a roughly 6\% - 7\% reduction in total theoretical profit, is already fully incorporated into our reported trading results, this uncertainty does not threaten the robustness of our core conclusions. Even when absorbing these estimated latency frictions, our high-frequency DP approach continues to significantly outperform the lower-frequency and day-ahead benchmarks.

\begin{table}[h!]
    \footnotesize
    \centering
    \begin{tabular}{lrrrrrr}
            \toprule
             & \multicolumn{2}{c}{\textbf{Corrections}} & & & \multicolumn{2}{c}{\textbf{Time (sec)}} \\
            \cmidrule(lr){2-3} \cmidrule(lr){6-7}
            \textbf{Type} & Count & \% of solves & Costs (\texteuro) / \% of total & Sim Reward (\texteuro) & Mean & Median \\
            \midrule
            DP-11  & 867 & 7.3\% & $-25,200$ / $-7.2$\% & $349,000$ & $0.93$ & $0.30$ \\
            DP-51  & 996 & 7.1\% & $-21,400$ / $-6.1$\% & $351,000$ & $8.61$ & $0.30$ \\
            DP-101 & 989 & 6.9\% & $-24,000$ / $-6.8$\%  & $352,000$ & $1.96$ & $0.30$ \\
            \bottomrule
        \end{tabular}
    \caption{Statistics on missed orders and subsequent feasibility corrections over the year 2021. 'Costs' represent the foregone profit impact of correcting non-physical positions caused by execution latency.}
    \label{tab:simulation_results_missed}
\end{table}

\subsection{Parameter Evaluation}\label{ssec:param_eval} 
The DP rolling intrinsic optimization relies on numerous parameters, each influencing trading behavior and solution accuracy in different ways. This section compares key storage and optimization parameters, demonstrating that our year-long trading results remain robust and that the parameters’ effects align with expectations.

To achieve this, we compare the effects of varying injection-efficiencies (assuming $\eta^+ = \eta^-$), linear degradation costs $\nu_\text{deg}$ and maximum storage capacities $\bar s$, each for three different equidistant DP value function grids $G$ with granularity $\vert G \vert = m$. Table \ref{tab:dp_param} provides a shortened overview of yearly rewards for each of the optimizations. Two clear trends can be observed as expected: A finer grid $G$ (larger $m$) results in more profitable trading as the solution is more accurate, and more penalizing injection efficiencies $\eta$ and linear degradation costs $\nu$ result in lower rewards because fewer trading opportunities become profitable. Both parameters have a significant effect on profits, with the most favorable settings nearly doubling the profits of the least favorable setting. This is especially interesting for the case of degradation costs, which are likely to decrease significantly in the near future, thereby greatly increasing the profitability of short-term trading on intraday markets.

Additionally, batteries with a smaller maximum storage capacity produce larger profits per MWh storage capacity, which is mainly caused by limited market depth and liquidity. Operating a larger storage system requires executing larger trade volumes. To fill these larger volumes, the algorithm must sweep deeper into the limit order book, consequently accepting progressively worse prices and shrinking the captured price spreads. Similarly, a battery with a duration of $2$h yields lower profits, but more than half of the profits observed for our standard setup with a duration of $1$h, because it can no longer charge and discharge in the hour(s) of the singular best price(s) of the day.

\begin{table}[t]
    \centering
    \footnotesize
    \begin{minipage}[t]{0.48\linewidth} % Left column
        \centering
        \begin{tabular}{lrrrr}
        \toprule
        & \multicolumn{4}{c}{\boldmath$\eta^+ = \eta^-$}  \\
        \cmidrule(lr){2-5}
        Reward [\texteuro]     & 0.9  & 0.95  & 0.99 & 1.0  \\
        \midrule
        \textbf{DP-11} & 277000 & 349000 & 441000 & 478000 \\
        \textbf{DP-51} & 278000 & 351000 & 444000 & 482000 \\
        \textbf{DP-101} & 279000 & 352000 & 442000 & 483000 \\
        \midrule
        & \multicolumn{4}{c}{\boldmath$\nu_\textbf{deg}$ \textbf{[€/MWh]}} \\
        \cmidrule(lr){2-5}
        Reward [\texteuro]      & 8 & 4 & 2 & 0 \\
        \midrule
        \textbf{DP-11} & 263000 & 349000 & 416000 & 434000 \\
        \textbf{DP-51} & 264000 & 351000 & 419000 & 439000 \\
        \textbf{DP-101} & 265000 & 352000 & 419000 & 441000 \\
        \bottomrule
        \end{tabular}
    \end{minipage}% <-- Important % to avoid extra space
    \hfill % Flexible space between minipages
    \begin{minipage}[t]{0.48\linewidth} % Right column
        \centering
        \begin{tabular}{lrrrr}
        \toprule
         & \multicolumn{4}{c}{\boldmath$\bar{s}$ \textbf{[MWh]} \ 1h-battery} \\
        \cmidrule(lr){2-5}
        Reward/$\bar{s}$ $[\frac{\text{\texteuro}}{\text{MWh}}]$  & 5 & 10 & 20 & 40 \\
        \midrule
        \textbf{DP-11} & 36600 & 34900 & 32600 & 29700 \\
        \textbf{DP-51} & 36900 & 35100 & 33200 & 30400 \\
        \textbf{DP-101} & 36900 & 35200 & 33400 & 30600 \\
        \midrule
        & \multicolumn{4}{c}{\boldmath$\bar{s}$ \textbf{[MWh]} \ 2h-battery} \\
        \cmidrule(lr){2-5}
        Reward/$\bar{s}$ $[\frac{\text{\texteuro}}{\text{MWh}}]$  & 5 & 10 & 20 & 40 \\
        \midrule
        \textbf{DP-11} & 27300 & 27000 & 25700 & 23900 \\
        \textbf{DP-51} & 27100 & 27400 & 26200 & 24600 \\
        \textbf{DP-101} & 27700 & 27300 & 26400 & 24700 \\
        \bottomrule
        \end{tabular}
    \end{minipage}
\caption{A reward comparison of simulation results for the full year 2021, using four different storage parameter settings. All simulation base-settings are equal to the setting defined in \ref{ssec:case_study_setting}, setting the ping-delay to 200 ms and using actual measured solve-times at execution. Varying storage parameters has the expected effect on yearly rewards, where a finer DP storage discretization generally leads to marginally higher rewards, increasing degradation costs and losses reduces the rewards, and operating a larger storage yields lower rewards per storage capacity. All results are rounded to significance, and the specific battery reward on the right table is given in units of \texteuro/MWh.}
\label{tab:dp_param}
\end{table}

\subsection{Parametrization of the Rolling Intrinsic}
The rolling intrinsic strategy is suboptimal due to its myopic nature. In particular, the lack of foresight leads to \emph{impatient trading decisions} that use flexibility early on, thereby foregoing parts of the storage's revenue potential. This happens in particular when the RI trades products with limited liquidity, typically long before gate closure, and for rather disadvantageous prices.

This opens the door for parametric variations of the rolling intrinsic policy that aim to improve on these issues by adjusting the trading behavior. However, finding optimal parameters requires extensive tuning efforts and, therefore, only works in combination with an efficient implementation of the policy. Our method therefore opens the door to optimized variations of the rolling intrinsic, which can be trained within a realistic timeframe.

To illustrate this point, we introduce an empirically motivated unitless linear parameter $\phi$ to the intrinsic optimization. It acts as a penalty for products with a large bid-ask spread. We then adapt the optimization \eqref{eq:intrinsic_dp} by extending $\pi_t$ to $\hat{\pi}_t = \pi_t - (\phi \cdot \delta_{t^*, t}) \mid f_t^0 - f_t \mid$, with $\delta_{t^*, t}$ defined as the bid-ask spread of product $t$ at time $t^*$. Intuitively, it becomes clear that $\phi$ incentivizes the rolling intrinsic to trade at times of higher market liquidity, i.e. at times closer to delivery, to reduce the chance of trading less profitably too early.

The increased speed of our DP solution method allows us to quickly search for optimal parameters $\phi$. For the sake of simplicity and just to show a proof of concept, we fix the parameters $\phi$ at constant values for each trading month, and use the previous month $m-1$ to train $\phi$ for a given month $m$. When evaluating this strategy over the entire year 2021, this sliding window parameter training results in $12$ separate training windows. We train the parameters on the highest intrinsic-solve frequency (see section \ref{ssec:yearly_comp}) using DP and simulation settings outlined in Section \ref{ssec:case_study_setting}, which represents the most realistic training scenario. We treat our monthly simulation rewards as a black box function $f(\phi)$ and, given its simple dependence on a singular parameter, use Brent's method to find its maximum.

The resulting optimal parameters for the year 2021 have an average penalty of $\bar{\phi} = 2.1 \pm 1.6$. Comparing year-long out-of-sample results of the newly trained policy yields an increase of $8.4$\% in trading reward over the entire year 2021, accumulating \texteuro{$376$k}, compared to the \texteuro{$347$k} reward of the standard policy with the penalty set to $0$, i.e., $\phi \equiv 0$.

While this specific extension remains an illustrative tuning exercise rather than a comprehensive new trading framework, it highlights the immense practical value of the proposed fast DP approximation. Ultimately, it demonstrates that our method provides the computational foundation required to rapidly prototype, iteratively train, and backtest more sophisticated parametric or machine-learning-driven strategies over massive, high-frequency datasets, a task that would be strictly impossible using traditional formulations.

% Found the parameters with Brent's bounded method and log-norm, for (unnormed) bounds [0.75, 10]. Should we say that?

%%%%%%%%%%%%%%%%%%%%%%%%%%%%%%%%%%%%%%%%
%%% 5. Conclusion              %%%%%%%%%
%%%%%%%%%%%%%%%%%%%%%%%%%%%%%%%%%%%%%%%%
\section{\label{sec:conclusion}Conclusion}
In this paper, we present a novel method to compute the rolling intrinsic policy for trading with a battery on continuous intraday markets. Our method achieves a solution speed improvement of up to three orders of magnitude while maintaining a sufficient level of accuracy. Our work lays the foundation for extensive future research through the extension and adaptation of our rolling intrinsic formulation and implementation, both for researchers and industry.

Our findings demonstrate the importance of strategies that take into account the exact workings of the continuous intraday market and, at the same time, can be executed fast enough to keep up with the millisecond pace of the market. We show through realistic backtesting that a solution frequency of at least every second increases rolling intrinsic profits substantially compared to solving the intrinsic only every few minutes or even every hour. Our synthesis of the backtesting results underscores a fundamental market paradigm: the continuous clearing mechanism structurally rewards high-frequency execution. Strategies that fail to react to tick-level order book updates frequently leave value on the table, as delayed optimization limits a battery's ability to lock in short-lived price spreads.

Furthermore, our efficient implementation enables us to backtest the profitability of the rolling intrinsic policy over extended time periods, which requires a substantial number of intrinsic solves across the simulation period, making traditional MILP or LP methods impractical due to their comparatively slow solution time. This also makes it possible to train parametric extensions of the rolling intrinsic that correct some of its shortcomings, as we demonstrate by optimizing a penalty for trading illiquid products, which substantially increases trading profits.

For practitioners, our results illustrate that resolving the limit order book at high frequencies is a highly relevant factor when designing and backtesting intraday trading algorithms. Our empirical analysis focuses exclusively on the 2021 German continuous intraday market, which operates within the European Single Intraday Coupling. While absolute profitability will naturally vary across different historical years, market regimes, and specific international regulatory environments, our findings clearly highlight the structural mechanics of continuous clearing. Within this specific market design, strategies equipped with the computational speed to process and react to tick-level data possess a distinct execution advantage over slower, lower-frequency benchmarks.

We acknowledge specific limitations in our modeling approach, particularly the linear approximation of battery degradation and the discretization error inherent in the DP formulation relative to the exact MILP. Furthermore, while our use of full-depth order book data ensures high fidelity, the backtest remains a simulation. It inherently cannot capture endogenous market impact. Specifically, how other market participants might have dynamically adjusted their strategies in response to our orders. Lastly, a notable limitation of our current approach is the exclusive focus on the intraday market, which omits the co-optimization potential with other highly profitable platforms such as the day-ahead or frequency containment reserve markets.

An important future extension of our work lies in more sophisticated parametrizations of the rolling intrinsic. Our naive method could easily be adapted to multiple additional parameters like trading delays, affecting various aspects of the RI's trading behavior, further utilizing the speed of our optimization method. Additionally, incorporating nuanced degradation costs, such as State of Health (SoH) and non-linear depth-of-discharge, would significantly enhance the realism and applicability for practitioners.  Furthermore, future work could naturally extend the trading to a multi-market scenario, where additional decision variables expand the scope of optimization to multiple subsequent electricity markets, e.g. the reserve or day-ahead markets.

Finally, extending our intrinsic optimization with dynamic price forecasts, or more general value functions of future actions, would greatly improve the RI's profitability, as the optimization would be able to anticipate future price changes and capture the option value of waiting, rather than relying on strictly myopic execution. Fine-tuning non-myopic, forecast-driven, or stochastic models to operate at the high frequencies enabled by our DP approach is a logical and highly promising avenue for future research.%Finally, extending our intrinsic optimization with dynamic price forecasts would greatly improve the RI's profitability, as the optimization would be able to anticipate future price changes. Fine-tuning forecasts to BESS applications would then be a logical utilization of the speed of our method.

\section*{Declaration of generative AI and AI-assisted technologies in the writing process}
During the preparation of this work, the authors used Gemini and ChatGPT in order to improve language and readability. After using this tool/service, the authors reviewed and edited the content as needed and take full responsibility for the content of the published article.

\bibliographystyle{elsarticle-harv} 
\bibliography{bib}

%%%%%%%%%%%%%%%%%%%%%%%%%%%%%%%%%%%%%%%%%%%%%%%%%%%%%%%%%%%%%%%%%
%%% A. Appendix %%%%%%%%%%%%%%%%%%%%%%%%%%%%%%%%%%%%%%%%%%%%%%%%
%%%%%%%%%%%%%%%%%%%%%%%%%%%%%%%%%%%%%%%%%%%%%%%%%%%%%%%%%%%%%%%%%
\appendix
\section{Nomenclature}

\noindent
\begin{longtable}{@{}ll@{}}
\multicolumn{2}{@{}l}{\textbf{Sets and Indices}} \\[0.5ex]
$\mathcal{T}$ & Set of delivery time periods / futures contracts \\
$t$ & Index for time period / delivery contract \\
$t^*$ & Current time / decision time \\
$t_0$ & Time when the earliest tradable contract goes into delivery \\
$\mathcal{O}_t$ & Set of all limit orders in the order book for contract $t$ \\
$\mathcal{O}_t^+, \mathcal{O}_t^-$ & Set of currently active asks (+) and bids (-) for contract $t$ \\
$i, j$ & Indices for individual limit orders \\
$G$ & Discretized finite grid of possible storage states \\
\\[1ex]

\multicolumn{2}{@{}l}{\textbf{Parameters and Constants}} \\[0.5ex]
$\bar{s}$ & Maximum storage energy capacity [MWh] \\
$s_0$ & Initial storage level [MWh] \\
$\bar{f}, \underline{f}$ & Physical bounds for maximum injection and withdrawal [MWh] \\
$f_t^0$ & Initial future positions [MWh] \\
$P_i$ & Limit price of order $i$ [\texteuro/MWh] \\
$Q_i$ & Available quantity of order $i$ [MWh] \\
$\eta^+, \eta^-$ & Efficiency factors for charging (+) and discharging (-) \\
$\nu_{\text{trade}}$ & Volume-based trading cost [\texteuro/MWh] \\
$\nu_{\text{deg}}$ & Linear degradation cost [\texteuro/MWh] \\
$\nu$ & Combined variable cost ($\nu_{\text{trade}} + \nu_{\text{deg}}$) [\texteuro/MWh] \\
$u$ & Minimum tradable unit/quantity [MWh] \\
$m$ & Number of grid points in the discretized storage state grid $G$ \\
$\kappa$ & Action discretization step parameter \\
$\phi$ & Penalty parameter for trading products with large bid-ask spreads \\
$\delta_{t^*, t}$ & Bid-ask spread of product $t$ at time $t^*$ [\texteuro/MWh] \\
\\[1ex]

\multicolumn{2}{@{}l}{\textbf{Variables and Functions}} \\[0.5ex]
$s_t$ & Storage level at time $t$ [MWh] \\
$f_t$ & Future position for contract $t$ [MWh] \\
$f_t^+, f_t^-$ & Total to be bought (+) and sold (-) volumes for contract $t$ [MWh] \\
$i_t, w_t$ & Total bought (+) and sold (-) volumes for contract $t$ including previous positions [MWh] \\
$q_t, q_i$ & Traded quantity of a contract or specific order $i$ [MWh] \\
$k_i, k_t$ & Integer multipliers for minimum tradable units \\
$\alpha_t$ & Binary decision variable preventing simultaneous charge/discharge \\
$V_t(s_{t-1})$ & Dynamic programming value function at stage $t$ \\
$\tilde{V}_t(s_t)$ & Linearly interpolated approximation of the value function \\
$\pi_t$ & Function encoding current order book cost/revenue information \\
$S(s_{t-1}, f_t)$ & State transition function yielding the subsequent storage level \\
\end{longtable}

\section{Code Publication}
\label{sec:code}
We publish our method as a Python package \textit{BitePy} (Battery Intraday Trading Engine), where users can preprocess their intraday market data, set battery and DP parameters, run simulations and analyze results. It is hosted on PyPi and can be easily installed via \textit{pip install bitepy}. Detailed documentation and tutorials on the package can be found on \href{https://github.com/dschaurecker/bitepy}{GitHub}.

\section{Acknowledgements}
The authors thank Markus Kreft and Patrick Langer for the helpful methodological discussions and for support with some of the figures. In addition, we thank Simon Hirsch for the factual checks, general feedback and for the support in wrapping the Package to Python and setting up the documentation.

\end{document}